
\documentclass[12pt,english]{article}
\usepackage[T1]{fontenc}
\include{KAM-type-bib}
\bibliographystyle{plain} 
\usepackage[latin9]{inputenc}
\usepackage{geometry}
\geometry{verbose,tmargin=2.5cm,bmargin=2.5cm,lmargin=2cm,rmargin=2cm}
\usepackage{amsthm}
\usepackage{amsmath}
\usepackage{amssymb}
\usepackage{graphicx}
\usepackage{setspace}
\usepackage{esint}
\onehalfspacing

\theoremstyle{plain}
\newtheorem{thm}{\protect\theoremname}[section]
  \theoremstyle{plain}
  \newtheorem{lem}[thm]{\protect\lemmaname}
  \theoremstyle{remark}
  \newtheorem*{rem*}{\protect\remarkname}
  \theoremstyle{definition}
  \newtheorem{defn}[thm]{\protect\definitionname}
  \theoremstyle{plain}
  \newtheorem{prop}[thm]{\protect\propositionname}
  \theoremstyle{plain}
  \newtheorem{cor}[thm]{\protect\corollaryname}
  \theoremstyle{remark}
  
  \theoremstyle{definition}
  \newtheorem*{example*}{\protect\examplename}

\usepackage{changepage}
\usepackage{caption}
\captionsetup{font=footnotesize,margin={.1\textwidth,.05\textwidth}}

\makeatother

\usepackage{babel}
  \providecommand{\claimname}{Claim}
  \providecommand{\corollaryname}{Corollary}
  \providecommand{\definitionname}{Definition}
  \providecommand{\examplename}{Example}
  \providecommand{\lemmaname}{Lemma}
  \providecommand{\propositionname}{Proposition}
  \providecommand{\remarkname}{Remark}
\providecommand{\theoremname}{Theorem}
\usepackage{color}

\begin{document}

\date{\today}

\begin{title}
{On near integrability of some impact systems}
\end{title}
\author{M. Pnueli$^1$, V. Rom-Kedar$^{1,2}$ \\
\normalsize $^1$ Department of Computer Science and Applied Mathematics, \\ \normalsize The Weizmann Institute of Science, Rehovot, Israel \\
\normalsize $^2$ The Estrin Family Chair of Computer Science and Applied Mathematics.
}
\date{\today}
\maketitle

\begin{abstract}
A class of Hamiltonian impact systems exhibiting smooth near integrable behavior is presented.  The underlying unperturbed model investigated is an integrable, separable, 2 degrees of freedom mechanical impact
system with effectively bounded energy level sets and a single straight
wall which preserves the separable structure. Singularities in the system appear either as trajectories with tangent impacts or as singularities in the underlying Hamiltonian
structure (e.g.  separatrices). It is shown that away from these singularities,
a small perturbation from the integrable structure results in smooth
near integrable behavior. Such a perturbation may occur from a small
deformation or tilt of the wall which breaks the separability upon impact, the addition
of a small regular perturbation to the system, or the combination
of both. In some simple cases explicit formulae to the leading order term in the near integrable return map are derived.  Near integrability is also shown to persist when the hard billiard boundary is replaced by a singular, smooth, steep potential, thus extending the near-integrability results beyond the scope of regular perturbations.
These systems  constitute an additional class of examples of near integrable impact systems, beyond the traditional one dimensional oscillating billiards, nearly elliptic billiards,  and the near-integrable behavior near the boundary of convex smooth billiards with or without magnetic field.\end{abstract}

\maketitle


\section{Background}

The global phase-space structure of smooth nonlinear \(n\) d.o.f. Hamiltonian systems with \(n\geq2\) is usually unknown. While numerical simulations for such systems are readily available, they are usually difficult to interpret due to our limited perception of high dimensional spaces. Moreover, the abundance of various phase space structures in such systems (tori, cantori, homoclinic tangencies, lower dimensional whiskered tori etc.), shadowed by  chaotic solutions, complicates the dynamics and its averaged and asymptotic expressions. One therefore seeks to study special classes of systems which are amenable to analysis in some limit and inspire the definition of particular observables and projections that detect the closeness of the given system to its limiting behavior. Traditional examples are near-integrable systems and slow-fast systems \cite{Arnold2013mathematicalmethods,Arnold2007CelestialMechanics,haller2012chaos}.
More recently, analytical tools for studying smooth near-billiard and near-impact systems have been developed \cite{kozlov1991billiards,rom2012billiards,rapoport2007approximating,lerman2012saddle,RK2014smooth}. In these works, the limit system is  a Hamiltonian Impact System (HIS) which describes the dynamics
of a particle moving under the influence of a potential
inside a domain and reflecting elastically from its boundary. Billiards correspond to the simplest HIS with inertial motion (trivial potential) in the domain interior. By this approach, to better understand systems with very steep potentials at the domain's boundary, one studies the limit system in which the steep part is replaced by impacts. Once the dynamics under the HIS are known, one  establishes which of its features persist \cite{rapoport2007approximating,RK2014smooth} and how those which do not persist bifurcate \cite{rom2012billiards,rapoport2008stability}.

The study of HIS combines the features of Hamiltonian dynamics and those of piecewise
smooth  dynamical systems \cite{bernardo2008piecewise,kunze1997application,makarenkov2012dynamics}, which are specific examples of hybrid systems (e.g. \cite{kazakov2013dynamical,granados2014scattering}).  Utilizing the Hamiltonian structure, one hopes to gain information on global scales. Yet, impacts destroy the smoothness  \cite{chernov2006chaotic,kozlov1991billiards} and possibly the integrability of the underlying Hamiltonian flow \cite{lerman2012saddle}.  Finding integrable HIS  and studying their behavior under perturbations (of the boundary and of the potential) expands the families of non-linear systems which we can analyse and, by utilizing the smooth impact framework, allows to establish near-integrablity results even though the perturbation terms in this case are formally large in the \(C^{r }\) topology. Here we provide such a class of prototype impact systems which are near-integrable and are amenable to analysis.
Previous near-integrability results for HIS have utilized the local dynamics near periodic orbits \cite{dullin1998linear,berglund1996billiards,berglund2000classical,RK2014smooth}, near a smooth convex boundary \cite{zharnitsky2000invariant,berglund1996integrability,berglund2000classical} and near saddle-center homoclinic connection of a quadratic potential with impacts \cite{lerman2012saddle}. Another approach utilized the generalized adiabatic theory in 1.5 d.o.f. systems, where the Hamiltonian dynamics are in one dimension and the boundary is slowly oscillating \cite{neishtadt2008jump,gorelyshev2006adiabatic,neishtadt2012destruction,artemyev2015violation} . Similar approaches were employed in the study of magnetic billiards \cite{robnik1985classical,berglund1996integrability,berglund1996billiards,berglund2000classical}.

Here we address the subject of  global structure and  stability of orbits on large portions of the phase space by identifying regimes in which  standard smooth near-integrable results apply (in particular, persistence of KAM tori and the emergence of resonances). These objects arise  even though, formally, we are far from the classical setup of smooth small perturbations to smooth integrable systems.
To this aim, we focus on
2 degrees of freedom mechanical impact systems, where the underlying
Hamiltonian is of the form $H=\frac{p_{1}^{2}}{2}+\frac{p_{2}^{2}}{2}+V(q_{1},q_{2})$ and \(V(\cdot)\)
is a separable,
smooth ($C^{r+1}$) potential with effectively bounded level sets. The impact in the system is realized as a single straight vertical wall, where the seperability assumption is with respect to the vertical wall coordinate system, so that this wall does not destroy
integrability. A perturbation from the integrable structure is then realized by either the addition of a small, $\mathcal{O}(\epsilon_r)$, $C^{r+1}$ regular coupling perturbation to the potential or a small $\mathcal{O}(\epsilon_w)$, $C^{r+1}$ deformation of the wall. The main result here is that under some specific conditions, in a large (\(\mathcal{O}(1)\) measure) phase space region, smooth near-integrable dynamics are realized for sufficiently  small \(\epsilon_{r}\) and  \(\epsilon_{w}\).  Moreover, using \cite{RK2014smooth}, it is shown that these results may be extended for the smooth system in which the hard wall is replaced by a soft steep potential, provided the potential is sufficiently steep (notably, the steeper the potential is the larger the perturbation is in the \(C^{r+1}\) topology).

The paper is organized as follows. In section \ref{sec:The-Integrable-System},
the underlying integrable structure of the systems investigated is
presented, and an integrable Poincar\'{e} return map is constructed. Conditions for smoothness of the return map are shown,
as well as the conditions for twist and for resonance. In Section
\ref{sec:Near-integrability-results}, it is shown how, following
the conclusions of section \ref{sec:The-Integrable-System}, one can achieve near integrability results when adding a small regular
perturbation to the system or when considering a small deformation of the
wall from the vertical, perpendicular position. Furthermore, it is shown that for the case of straight walls explicit formulae for the leading order terms of this return map in the wall inclination and the smooth perturbation term may be calculated as Melnikov-type integrals.  Near integrability is also extended to the corresponding soft impact systems. An example to the main results is given in section
\ref{sec:Examples}, where, additionally, the global perturbed phase-space structure is presented in an impact-energy-momentum diagram. We summarize our results in section \ref{sec:Conclusions}.

\section{Setup and integrability results} \label{sec:The-Integrable-System}


Consider a 2 degrees of freedom mechanical impact
system of the form:
\begin{equation}
H=H(\cdot;\epsilon_{r},\epsilon_{w},q^{w},b)=H_{int}(q_{1},p_{1},q_{2},p_{2})+\epsilon_{r}V_{r}(q_{1},q_{2})+b\cdot{}V_{b}(q_{}-q_{}^{w};\epsilon_{w})
\label{eq:hgeneral}
\end{equation}
where the underlying integrable structure is separable (see below), the potential $V_{r}(q_{1},q_{2})$
 corresponds to a regular smooth ($ C^{r+1}$ with $r>3$) coupling term and the singular billiard potential \( V_{b}(q_{}-q_{}^{w};\epsilon_{w})\) represents the singular impact term. Hereafter, for \(\epsilon_{w}=0\), the impact corresponds to a single vertical wall  passing through the origin (namely, with no loss of generality, the \(q_{2}\) axis is set along the wall and the origin is set at some point on the wall, otherwise shift the $q_2$ coordinate by a constant value). A non-zero \(\epsilon_{w}\) corresponds to small perturbations, in $ C^{r+1}$, from the vertical geometry, so \(q^{w}=(q_{1}^{w}=\epsilon_{w}Q^{w}(q_{2}^w;\epsilon_{w}),q_{2}^w)\), and \(Q^{w}\) is a \(C^{r+1}\) function satisfying \(Q^{w}(0;0)=0\). Motion occurs to the right of the wall;  the wall is realized in
the system as a singular energy barrier:
\begin{equation}
V_b=\begin{cases}
0, & (q_1,q_2):q_1>\epsilon_{w}Q^{w}(q_{2};\epsilon_{w})\\
1, & (q_1,q_2):q_1\leq\epsilon_{w}Q^{w}(q_{2};\epsilon_{w})
\end{cases}\label{eq:billiardpot}
\end{equation}
and $b$ is either a fixed large number or zero (when positive it is taken such that for all energies of interest the wall is impassable, whereas $b=0$ refers to the smooth Hamiltonian system without the impact).
  
 The integrable structure of \(H_{int}\) of (\ref{eq:hgeneral}) is of the form:\begin{equation}
H_{int}=\frac{||p||^{2}}{2}+V_{int}(q_{1},q_{2})=\frac{p_{1}^{2}}{2}+V_{1}(q_{1})+\frac{p_{2}^{2}}{2}+V_{2}(q_{2})
=H_{1}(q_{1},p_{1})+H_{2}(q_{2},p_{2}),
\label{eq:hint}
\end{equation}
where the potential $V_{int}=V_{1}(q_{1})+V_{2}(q_{2})$ is Separable,   $C^{r+1}$ ($r>3 $) Smooth, Simple (has finite, discrete number of simple
extremum points), Bounded from below and $V_i$ go to infinity
as $|q_{i}|\rightarrow\infty$, so $H_{int}$ has only bounded level
sets. Therefore the perturbation terms \(V_{r}(\cdot),Q^w(\cdot)\) are bounded on the energy surfaces (see appendix), where the bound depends on \(H\). Hereafter we assume that \(H=\mathcal{O}(1)\) - the asymptotic behavior at large \(H \) may require additional analysis.  
\begin{defn} Integrable Hamiltonians of the form (\ref{eq:hint}) satisfying the above conditions will  hereafter be called \emph{Hamiltonians of the S3B (Separable, Smooth, Simple, Bounded level sets) class}.\end{defn}



Next we define the phase space region for which the results apply. 
We first describe the smooth integrable structure.
Denote, in each sub phase space $i=1,2$,
the $n_{ic}$ center fixed points by $(q_{ic,1...n_{ic}},0)$ and
the $n_{is}$ saddle fixed points by $(q_{is,1...n_{is}},0)$. Let \(\mathbb{I}_{H_{i}}(H)\) denote the set of allowed \(H_{i}\) values for a given \(H\) (here, the interval \(H_i\in[\min V_{i},H-\min V_{\bar i}]\), where \(\bar i\) denotes the complement phase space to \(i\)), and let \(\mathcal{N}_{\delta}(H_i^{*})\) denote the \(\delta-\) open interval of \(H_i\) values around \(H_i^*\).
 The closed set of regular integrable \(H_{i}\) values
on a given energy level \(H\) 

\begin{equation}
\mathbb{H}_{i}^{R,\delta}(H) = \mathbb{I}_{H_{i}}(H)\backslash[\bigcup_{j=1}^ {n_{ is}}
\mathcal{N}_{\delta}(H_{i}(q_{is,j},0))\cup\bigcup_{j=1}^{n_{\bar is}}
\mathcal{N}_{\delta}(H-H_{\bar i}(q_{\bar i s,j},0))], \ \delta>0,i=1,2\label{eq:regularHiset}
\end{equation}
corresponds to \(H_{i}\) values for which the Liouville leaves are bounded away from singularities, namely the energy of the level sets in both the $(q_i,p_i)$ plane and in the $(q_{\bar{i}}, p_{\bar{i}})$ plane  are at least $\delta-$ away from the energies of the  planar singular level sets of 
the saddle points $q_{is},q_{\bar i s}$ respectively (hereafter, normally elliptic lower dimensional tori are included in the regular set). Clearly, the measure of these $H_i$ intervals is of \(\mathcal{O}(1)\) when \(\delta\rightarrow0\) :
\begin{equation}
|\mathbb{H}_{i}^{R,\delta}(H)|\geq H-\min V_{i}-\min V_{\bar i}-2\delta(n_{ is}+n_{ \bar is}).\label{eq:lengthofHi}
\end{equation}

Using the local action-angle variables for the smooth unperturbed integrable system (\( \epsilon_{r}=0, b=0\)), for all $H_{2}\in \mathbb{H}_{2}^{R,\delta}(H)$, $H_{int}$
can be written as $H_{int}(J,I)=H_{1}(J)+H_{2}(I)$, and the dynamics
on the corresponding leaves of the level sets are described by: \begin{multline}
\begin{cases}
\dot{\varphi}=\frac{\partial H_{int}}{\partial J}=\omega_{1}(J), & \dot{J}=-\frac{\partial H_{int}}{\partial\varphi}=0\\
\dot{\theta}=\frac{\partial H_{int}}{\partial I}=\omega_{2}(I), & \dot{I}=-\frac{\partial H_{int}}{\partial\theta}=0
\end{cases}\Rightarrow\begin{cases}
\varphi(t)=\varphi_{0}+\omega_{1}(J_{0})\cdot t & J(t)=J_{0}\\
\theta(t)=\theta_{0}+\omega_{2}(I_{0})\cdot t & I(t)=I_{0}
\end{cases}
\end{multline}
where $(J_{0},\varphi_{0})=S_{1}(q_{1}(0),p_{1}(0)),(I_{0},\theta_{0})=S_{2}(q_{2}(0),p_{2}(0))$ and \(S_i\) denote  local transformations to action-angle coordinates on each leaf.
A branch of the Liouville folliation corresponds to a family of regular leaves (here each leaf is a torus, each branch a one parameter family of tori). On each branch, away from the branch boundaries, the transformation to action-angle coordinates is smooth  and well defined. For each \(H\), the set $ \mathbb{H}_{2}^{R,\delta}(H)$ is composed of a finite number of closed intervals, each corresponding to a finite number of branches.  Since \(H_{i}\) are mechanical Hamiltonians,  $\omega_{i}(\cdot)>0,i=1,2$, and thus $H_{i}^{-1}(\cdot)$ are  uniquely defined on each branch of the Liouville folliation \cite{lerman1998integrable,fomenko2004integrable,Arnold2007CelestialMechanics}. For all level sets in  \(\mathbb{H}_{2}^{R,\delta}(H),\) by the S3B assumption (no parabolic points), there exists a \(K>0\) such that for all  \(I\in H_2^{-1}(\mathbb{H}_{2}^{R,\delta}(H))\)    the frequencies  \(\omega_{2}(I)=H_{2}'(I)\) are bounded from below:
\begin{equation}
\omega_2\geq{}\left|\frac{K}{\ln(\delta)}\right|\label{eq:omegalowerbnd}
\end{equation}
The notation \(H_2^{-1}(\mathbb{H}_{2}^{R,\delta}(H))\) refers to the multi-valued set defined on all (finite number of) relevant branches, and  $H_2^{-1}$ is well defined even when $\delta\rightarrow0$ (though it may be discontinuous at separatrices). It follows that the measure of the excluded set of action values $I$, similarly to the corresponding set of excluded energies in (\ref{eq:regularHiset}) goes to $0$ as $\delta{}\rightarrow{}0$ - see Theorem \ref{thm:(Smoothness)}.

\subsection{Integrable impact return map}

When the wall is vertical ($b\neq{}0, \epsilon_{w}=0$), namely, it respects the separability symmetry of the underlying integrable Hamiltonian flow, one immediately concludes, by the rule of elastic
reflection and the symmetry of the kinetic energy term, integrability:

\begin{lem}
\label{lem:perp=integrable}(Integrability) When $\epsilon_{r}=\epsilon_w=0$ the dynamics of the impact system are integrable.
\end{lem}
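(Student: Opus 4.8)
The plan is to produce two functionally independent first integrals that are in involution and that survive \emph{both} the smooth flow and the impact, and then to read off integrability from the resulting foliation. First I would record the structure of the motion when $\epsilon_r=\epsilon_w=0$: trajectories consist of arcs of the smooth separable flow of $H_{int}=H_1(q_1,p_1)+H_2(q_2,p_2)$ in the open half-plane $q_1>0$, interrupted by instantaneous elastic reflections at the vertical wall $q_1=0$. In the interior the separable structure already furnishes the two integrals $H_1$ and $H_2$, each conserved by the smooth flow, and since they depend on disjoint canonical pairs their Poisson bracket vanishes identically, $\{H_1,H_2\}=0$. So on each flow arc the dynamics is integrable with $(H_1,H_2)$ as the commuting integrals.

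The key step, and essentially the only nontrivial one, is to verify that both integrals are preserved across an impact. Because the wall coincides with the $q_2$-axis, its normal points along $q_1$, so the elastic reflection law together with the symmetry of the kinetic energy reverses only the normal momentum:
\begin{equation}
\mathcal{R}:(q_1,p_1,q_2,p_2)=(0,p_1,q_2,p_2)\longmapsto(0,-p_1,q_2,p_2).
\end{equation}
The integral $H_2=\tfrac{p_2^2}{2}+V_2(q_2)$ involves neither $q_1$ nor $p_1$ and is therefore trivially fixed by $\mathcal{R}$, while $H_1=\tfrac{p_1^2}{2}+V_1(q_1)$ depends on $p_1$ only through $p_1^2$, so the sign flip leaves it invariant as well. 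Hence $H_1$ and $H_2$ are first integrals of the complete impact flow, in involution and functionally independent on every regular common level set.

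To conclude I would assemble these facts in the impact setting. By the S3B assumptions the common level sets $\{H_1=c_1,\,H_2=c_2\}$ are compact, and away from the singular (separatrix) leaves each such level set is topologically a $2$-torus: the $(q_1,p_1)$ motion is a periodic oscillation that either turns smoothly at both ends or bounces off $q_1=0$ on one end, and the $(q_2,p_2)$ motion is an ordinary libration or rotation. On each leaf $\mathcal{R}$ acts as a smooth involution gluing the incoming branch ($p_1<0$) to the outgoing branch ($p_1>0$) along the section $q_1=0$, so the impact flow is the composition of two independent periodic motions and is conjugate to a (piecewise-smooth) rotation — i.e.\ Liouville integrable in the impact sense.

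The main obstacle is conceptual rather than computational: the classical Liouville--Arnold theorem presumes a globally smooth flow, and here smoothness is destroyed at the wall. What I must therefore emphasize is that the vertical-wall symmetry is exactly what repairs this, since it makes $\mathcal{R}$ preserve the commuting integrals and act as a smooth involution on each regular level set, so that the foliation by invariant tori, and the well-defined frequency structure on them, persist through the impacts.
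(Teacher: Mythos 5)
Your proposal is correct and follows exactly the paper's (one-line) argument: the paper concludes integrability ``by the rule of elastic reflection and the symmetry of the kinetic energy term,'' which is precisely your observation that the reflection $\mathcal{R}$ at the vertical wall flips only $p_1$, so both $H_1$ and $H_2$ survive impacts and the invariant tori persist. You simply spell out in full the details the paper deems immediate, including the worthwhile remark that Liouville--Arnold's smoothness hypothesis is repaired by the wall respecting the separable symmetry.
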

The vertical wall produces the additional singular level sets that correspond to tangent trajectories with \(H_1\) energy \(V_1(0)\).
Generically, such level sets do not coincide with the  singular level sets of \(H_{1}\), namely:
\begin{defn}\label{def:regular-wall-position} The vertical wall position is  regular  if $V_{1}'(q_{1}^{w}=0)\ne 0$ and $V_{1}(q_{1}^{w})\ne V_{1}(q_{1s,j}), j=1,...,n_{1s}$.
\end{defn}

 Since \(H_{1}\) is of the mechanical form, it follows that for any given \(H_{1}\) there exists at most a single Liouville leaf in the \((q_{1},p_1)\) plane which intersects the wall at \(q_{1}=q_1^w=0\), called hereafter the \textit{intersecting leaf}.  In particular, for a family of intersecting leaves, the value of   \(|p_{1}^{w}(H_{1})|= \sqrt{H_{1}-V_1(0)}\) is uniquely defined and is monotone in \(H_{1}\) for all \(H_{1}>H_{1tan}=V_1(0)\) as is the dependence on \(J=H_{1}^{-1}(\cdot)\)   on such intersecting leaves. For fixed energy \(H=H_{1}+H_2\), a branch can be either intersecting (meaning that all leaves of this branch intersect the wall), non-intersecting, or tangent. Namely, the location of the perpendicular wall determines uniquely   \textit{the tangent branch;} for regular wall position, for each energy value $H=h\geq{}V_1(0)+\min{V_2}$, there exists a unique leaf within the level set   \(H_1=H_{1tan}, H_{2}=h-H_{1tan    }\)  ,  at which a tangency occurs. For a fixed energy \(H\), we call the branch corresponding to this leaf \textit{the tangent branch}, and on this branch     \(J\approx J_{tan}=H_{1}^{-1}(V_{1}(0))\) uniquely defines the leaves. In conclusion, we establish:
\begin{lem}
 In the unperturbed vertical wall case (\(\epsilon_r=\epsilon_{w}=0\)) with regular wall position, for the flow restricted to the tangent and intersecting branches, tangency
occurs at \(J_{tan}=H_{1}^{-1}(V_{1}(0))\) whereas impacts occur iff  $J \ge J_{tan}$.
\end{lem}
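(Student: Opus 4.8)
The plan is to reduce the whole question to the one-degree-of-freedom \((q_{1},p_{1})\) subsystem, since the wall involves only the coordinate \(q_{1}\) and \(H_{int}\) is separable: the value of \(H_{2}\) is irrelevant to whether a given leaf meets the wall \(q_{1}=0\). First I would use conservation of \(H_{1}=\frac{p_{1}^{2}}{2}+V_{1}(q_{1})\) to express the momentum with which a trajectory reaches the wall through the relation \(|p_{1}^{w}(H_{1})|=\sqrt{H_{1}-V_{1}(0)}\) recorded above. This quantity is real and nonzero precisely when \(H_{1}>V_{1}(0)\) and vanishes exactly when \(H_{1}=V_{1}(0)\), which already isolates the three regimes I must distinguish.

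Next I would identify tangency geometrically. In the \((q_{1},p_{1})\) plane the level curve \(\{\tfrac{p_{1}^{2}}{2}+V_{1}(q_{1})=H_{1}\}\) meets the vertical line \(q_{1}=0\) with a vertical tangent, i.e. at a turning point \(p_{1}=0\), exactly when \(V_{1}(0)=H_{1}\). The regular-wall hypothesis \(V_{1}'(0)\neq 0\) guarantees this contact is simple: differentiating the energy relation gives \(\tfrac{dp_{1}}{dq_{1}}=-V_{1}'(q_{1})/p_{1}\), which blows up as \(p_{1}\to 0\) precisely because \(V_{1}'(0)\neq 0\), so \(q_{1}=0\) is a genuine nondegenerate turning point and not a fixed point of the \(q_{1}\)-flow. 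Hence tangency occurs at the energy \(H_{1}=H_{1tan}=V_{1}(0)\), and since \(H_{1}^{-1}\) is well defined on the relevant branch this is the action value \(J_{tan}=H_{1}^{-1}(V_{1}(0))\).

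The impact criterion then follows from monotonicity. Because \(H_{1}\) is a mechanical Hamiltonian, \(H_{1}'(J)=\omega_{1}(J)>0\), so \(H_{1}\) is strictly increasing along each branch; on the tangent and intersecting branches this yields the equivalences \(H_{1}>V_{1}(0)\Leftrightarrow J>J_{tan}\) and \(H_{1}=V_{1}(0)\Leftrightarrow J=J_{tan}\). For \(J<J_{tan}\) the accessible set \(\{q_{1}:V_{1}(q_{1})\le H_{1}\}\) does not contain the wall, so the leaf lies strictly in \(\{q_{1}>0\}\) (a small oscillation in the well adjacent to the wall) and no impact occurs; for \(J>J_{tan}\) the leaf crosses \(q_{1}=0\) with \(p_{1}^{w}\neq 0\) and an impact occurs; the boundary case \(J=J_{tan}\) is the tangency. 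Collecting these, the wall is reached iff \(J\ge J_{tan}\), as claimed.

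The step I expect to be the main obstacle is verifying that the tangent leaf genuinely separates impacting from non-impacting leaves on the adjacent branch, i.e. that as \(H_{1}\) increases through \(V_{1}(0)\) the relevant turning point crosses the wall monotonically and transversally, so that the non-intersecting/tangent/intersecting trichotomy is correctly ordered in \(J\). This is exactly where the full regular-wall definition enters: \(V_{1}'(0)\neq 0\) forces the turning point to sweep across \(q_{1}=0\) transversally, ruling out a degenerate or reversed tangency, while \(V_{1}(0)\neq V_{1}(q_{1s,j})\) places \(J_{tan}\) in the interior of a regular branch rather than on a separatrix, so that \(H_{1}^{-1}\) is continuous there. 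Together with the S3B simplicity assumption (finitely many simple extrema, no parabolic points), these rule out extra turning points accumulating at the wall and guarantee that the unique intersecting leaf established above depends monotonically on \(H_{1}\), which closes the argument.
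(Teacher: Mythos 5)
Your proposal is correct and takes essentially the same route as the paper, which establishes this lemma through the discussion immediately preceding it: separability reduces the question to the $(q_1,p_1)$ subsystem, the relation $|p_{1}^{w}(H_{1})|=\sqrt{H_{1}-V_{1}(0)}$ identifies the tangent leaf at $H_{1}=H_{1tan}=V_{1}(0)$, and monotonicity of $H_{1}(J)$ on each branch (since $\omega_{1}(J)>0$ for mechanical Hamiltonians) converts the energy threshold into $J\ge J_{tan}$. Your explicit checks---that $V_{1}'(0)\neq 0$ makes the turning point sweep transversally across the wall and that $V_{1}(0)\neq V_{1}(q_{1s,j})$ places $J_{tan}$ in the interior of a regular branch where $H_{1}^{-1}$ is well defined---merely spell out what the paper's regular-wall-position definition is invoked for.
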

Hereafter, unless specified otherwise, we consider the dynamics only on the tangent and intersecting branches (for the integrable dynamics all other branches are unaffected by the impact). 

Next, a return map of the integrable impact system is constructed and it is proven that it is  $C^{r}$ smooth and satisfies the twist condition for most initial conditions. 
Since the motion occurs to the right of the vertical wall and the impact occurs whenever $q_{1}=0,$  choosing the cross-section  $\Sigma=\{(q_{1},p_{1}):p_{1}=0,\dot{p_{1}}<0\}$
 ensures that in each iteration of the return map at most a single
collision with the wall occurs. The return map to \(\Sigma \), 
 $ $ for the system
without the impact is simply:
\begin{equation}
\begin{cases}
I'&=I(T_{1}(J))=I\\
\theta'&=\theta(T_{1}(J))=\theta+\omega_{2}(I)\cdot T_{1}(J(H,I))=\theta+\frac{T_{1}(J(H,I))}{T_{2}(I)}\cdot2\pi
\end{cases}\label{eq:int_returnmap}
\end{equation}
where  $J=J(H,I)=H_{1}^{-1}(H-H_{2}(I))$,  \(T_{1}(J)=\frac{2\pi}{\omega_{1}(J)}\) and  $T_{2}(I)=\frac{2\pi}{\omega_{2}(I)}$ are  well defined for \(I\in H_2^{-1}(\mathbb{H}_{2}^{R,\delta}(H))\) for small $\delta>0$.
Similarly, the corresponding  return map $\mathcal{F}_0:(I,\theta)\rightarrow(I',\theta')$
of the integrable impact system is defined for \(I\in H_2^{-1}(\mathbb{H}_{2}^{R,\delta}(H))\) by:

\begin{equation}
\begin{cases}
I'=I\\
\theta'=\theta+\omega_{2}(I)\cdot(T_{1}(J)-\Delta t_{travel}(J))=\theta+2\pi\,\frac{\tilde{T}_{1}(J)}{T_{2}(I)}\equiv\theta+\Theta(I,J(H,I))
\end{cases}\label{eq:int_impact_returnmap}
\end{equation}
 with
\begin{equation}
\Delta t_{travel}(J)=\begin{cases}
2\int_{q_{1min}(J)}^{0}\frac{dq_{1}}{\sqrt{2(H_{1}(J)-V_{1}(q_{1}))}} & \mbox{impact }(J>J_{tan})\\
0 & \mbox{no impact, tangency }(J\leq J_{tan})
\end{cases}
\end{equation}
where $q_{1min}(J)$ is the minimal $q_1$ value on the chosen intersecting leaf that satisfies $V_1(q_{1min})=H_1(J)$, i.e. the leftmost point of the trajectory (outside the billiard). Namely, $\Delta t_{travel}$ is the time of travel outside the billiard
which is lost due to the impact (see Figure \ref{fig:deltattravel}) and $\tilde{T}_{1}(J)=T_{1}(J)-\Delta t_{travel}(J)$
is the new period time in $J$. Generically, we expect that the level set of  \(H_{2}=H-V_{1}(0)\) is regular:
 \begin{defn}\label{def:regular-H-wall} \(H\) is  \(\delta-\)regular with respect to the wall position if $V_{1}(0)\in \mathbb{H}_{1}^{R,\delta}(H)$.
\end{defn}
For regular wall position, for sufficiently small \(\delta\), there are at most a finite
number of \(\delta-\)
intervals for which \(H\) is not regular, corresponding to the energy surfaces at which \(H-V_{1}(0)\) are close to  \(V_2(q_{2s,j})\) for some \(j\).
For $\delta$-regular \(H\) values, denote by $I_{tan}(H)=H_{2}^{-1}(H-V_{1}(0))$ and notice that impacting
trajectories, corresponding to $J>J_{tan}$, yield $I<I_{tan}(H)$. We now establish:

\begin{figure}
\begin{centering}
\includegraphics[scale=0.35]{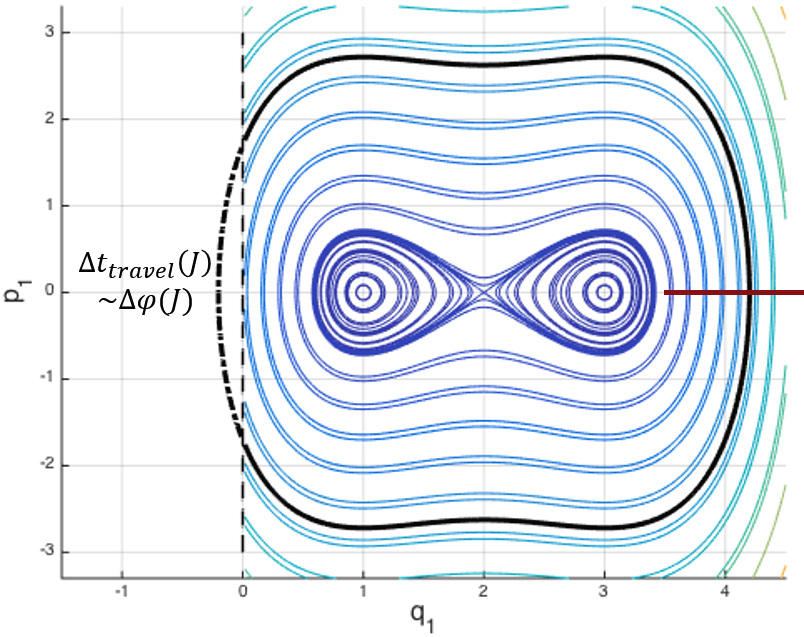}
\par\end{centering}
\protect\caption{\label{fig:deltattravel} Isoenergetic curves in $(q_1,p_1)$, with an impacting trajectory emphasized in black. The wall is at $q_{1}^{w}=0$ (dashed line),  and the cross-section $\Sigma$ corresponding to the relevant intersecting branch is drawn in red. The impact causes a jump in the angle $\Delta\varphi$, which is proportional to the time of travel $\Delta{}t_{travel}$ outside the billiard, i.e. the travel time from $p_1$ to $-p_1$ at the impact point had there been no wall. The Hamiltonian $H_{1}$ in this figure corresponds
to the undamped Duffing oscillator - see section \ref{sec:Examples}.}
\end{figure}

\begin{thm}
\label{thm:(Smoothness)}(Smoothness of (\ref{eq:int_impact_returnmap}))  Consider a Hamiltonian $H$ of the form Eq.
(\ref{eq:hgeneral}) with an S3B  integrable structure \(H_{int}\) and a regular wall position (Def. \ref{def:regular-wall-position}), with $\epsilon_{r}=\epsilon_{w}=0$.
Fix \(\delta>0, \rho>0\), and consider a  \(\delta-\)regular energy level  $H<b$. Then for \(I\) in  \(H_2^{-1}(\mathbb{H}_{2}^{R,\delta}(H))\), excluding  a \(\rho-\) interval centered at \(I_{tan}\)  (so \(I\in H_2^{-1}(\mathbb{H}_{2}^{R,\delta}(H))\backslash\mathcal{N}_{\rho}(I_{tan}(H))\)), the return
map $\mathcal{F}_0:(I,\theta)\rightarrow(I',\theta')$ is   symplectic and $C^{r}$ smooth,  i.e. $\exists M_{r}(\rho,\delta)<\infty$
such that $||\Theta(I,J(H,I))||_{C^{r}}<M_{r}(\rho,\delta)$. Moreover,  the regular set    \( H_2^{-1}(\mathbb{H}_{2}^{R,\delta}(H))\backslash \mathcal{N}_{\rho}(I_{tan}(H))\)
on which 
the return map (\ref{eq:int_impact_returnmap}) is  \(C^{r }\) smooth is of \(\mathcal{O}(1)\) in \(\delta,\rho\).\end{thm}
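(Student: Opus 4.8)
The plan is to reduce the claim to a smoothness-and-boundedness statement about the single scalar function \(\Theta(I,J(H,I)) = 2\pi\,\tilde{T}_{1}(J)/T_{2}(I)\) appearing in (\ref{eq:int_impact_returnmap}), after which symplecticity and the measure estimate follow almost for free. Since \(I'=I\) and \(\theta'=\theta+\Theta(I,J(H,I))\), the Jacobian of \(\mathcal{F}_0\) is lower-triangular with unit diagonal, hence has determinant one, so \(\mathcal{F}_0\) is area preserving on the \((I,\theta)\) cylinder (equivalently, it is a Poincar\'e map of a Hamiltonian flow to \(\Sigma\), which is automatically symplectic) as soon as \(\Theta\) is \(C^1\). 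Everything therefore hinges on showing that \(\Theta\) is \(C^{r}\) in \(I\) with a finite \(C^{r}\) norm on the stated set, and that this set is \(\mathcal{O}(1)\) in \(\delta,\rho\).

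I would assemble \(\Theta\) from three scalar ingredients and one composition: the periods \(T_{2}(I)=2\pi/\omega_{2}(I)\) and \(T_{1}(J)=2\pi/\omega_{1}(J)\), the lost travel time \(\Delta t_{travel}(J)\), and the energy constraint \(J=J(H,I)=H_{1}^{-1}(H-H_{2}(I))\). On each regular branch of the Liouville foliation the Liouville--Arnold construction yields \(C^{r}\) action--angle variables; here I use that \(V_{i}\in C^{r+1}\) and that, by the S3B \emph{Simple} hypothesis, the relevant extrema are non-degenerate, so the level sets are smooth closed curves carrying a non-vanishing flow. Consequently \(\omega_{i}(\cdot)=H_{i}'(\cdot)\) and \(T_{i}(\cdot)\) are \(C^{r}\) on the regular set, and the \(\delta\)-regular hypothesis together with the lower bound (\ref{eq:omegalowerbnd}) keeps \(\omega_{2}\) bounded away from zero, so \(T_{2}\) is \(C^{r}\) with bounded norm and the quotient defining \(\Theta\) has a denominator bounded below. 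Because \(H_{1}\) is mechanical with \(\omega_{1}>0\), it is strictly monotone along each intersecting branch, so \(H_{1}^{-1}\) is \(C^{r}\) there; composing with the \(C^{r}\) map \(I\mapsto H-H_{2}(I)\) shows \(J(H,I)\) is \(C^{r}\) in \(I\).

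The crux is the smoothness and quantitative control of \(\Delta t_{travel}(J)=2\int_{q_{1min}(J)}^{0}\frac{dq_{1}}{\sqrt{2(H_{1}(J)-V_{1}(q_{1}))}}\). This is an improper integral with an integrable singularity at the left turning point \(q_{1min}(J)\), where \(V_{1}(q_{1min})=H_{1}(J)\), whereas the upper limit \(q_{1}=0\) is a regular interior point precisely because the wall position is regular (\(V_{1}'(0)\neq0\)). Away from saddle energies --- which is exactly what \(\delta\)-regularity provides --- the turning point is non-degenerate (\(V_{1}'(q_{1min})\neq0\)), and a regularizing substitution at the turning point (e.g. passing to the momentum, or factoring the square-root zero via \(q_{1}=q_{1min}(J)+\xi\)) converts the integrand into a smooth, non-vanishing function of \((\xi,J)\); one may then differentiate under the integral sign up to order \(r\), obtaining a \(C^{r}\) function with bounds depending on \(\delta\). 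The obstruction, and the reason for excising \(\mathcal{N}_{\rho}(I_{tan})\), is the tangency limit \(J\to J_{tan}^{+}\): there \(q_{1min}(J)\to0\), the integration interval collapses onto the regular point \(q_{1}=0\), and a short computation using \(|p_{1}^{w}(H_{1})|=\sqrt{H_{1}-V_{1}(0)}\) gives the generic square-root behavior \(\Delta t_{travel}(J)\sim c\sqrt{H_{1}(J)-V_{1}(0)}\sim c'\sqrt{J-J_{tan}}\), whose \(J\)-derivatives blow up at \(J_{tan}\). Restricting \(I\) to distance at least \(\rho\) from \(I_{tan}\) keeps \(J-J_{tan}\) bounded away from zero and so bounds these derivatives. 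This is the step I expect to demand the most care, both to justify differentiation under the integral and to track the \((\rho,\delta)\)-dependence of the constants.

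Combining the pieces, \(\tilde{T}_{1}=T_{1}-\Delta t_{travel}\) is \(C^{r}\) in \(J\), \(J(H,I)\) is \(C^{r}\) in \(I\), and \(1/T_{2}(I)\) is \(C^{r}\) with denominator bounded below, so by the chain and product rules \(\Theta\) is \(C^{r}\) with a finite norm \(M_{r}(\rho,\delta)=\|\Theta\|_{C^{r}}<\infty\). For the measure claim, (\ref{eq:lengthofHi}) shows the energy window \(\mathbb{H}_{2}^{R,\delta}(H)\) has length \(\mathcal{O}(1)\) as \(\delta\to0\); since \(dI/dH_{2}=1/\omega_{2}\) with \(\omega_{2}\) bounded above (bounded level sets) and below by (\ref{eq:omegalowerbnd}), the image \(H_{2}^{-1}(\mathbb{H}_{2}^{R,\delta}(H))\) has \(\mathcal{O}(1)\) measure, and removing \(\mathcal{N}_{\rho}(I_{tan})\) subtracts only \(\mathcal{O}(\rho)\). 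Hence the regular set is \(\mathcal{O}(1)\) in \(\delta,\rho\), completing the argument.
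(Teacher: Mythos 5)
Your proposal is correct, but it takes a genuinely different route at the crucial step. The paper's proof is dynamical and very short: away from tangency the two sections \(\{q_1=0,p_1<0\}\) and \(\{q_1=0,p_1>0\}\) are transverse to the flow, so \(\Delta t_{travel}(J)\) is simply the transition time between two transverse cross-sections; finiteness follows because separatrix neighborhoods are excluded, and \(C^r\) smoothness follows from smooth dependence on initial conditions plus the implicit function theorem, without ever manipulating the explicit integral. You instead analyze the integral formula directly, desingularizing the square-root turning-point singularity and differentiating under the integral sign; this is legitimate (your observation that \(\delta\)-regularity forces \(V_1'(q_{1min})\neq0\) is right, since a critical left endpoint of the component of \(\{V_1\le H_1\}\) would have to be a local maximum of \(V_1\), i.e.\ a saddle energy, which is excluded), and it buys quantitative information the paper leaves implicit --- notably the asymptotics \(\Delta t_{travel}\sim c\sqrt{J-J_{tan}}\), which pinpoint exactly why \(\mathcal{N}_{\rho}(I_{tan})\) must be excised and how \(M_r(\rho,\delta)\) degenerates as \(\rho\to0\); the paper's flow-based argument is shorter and more robust (it extends immediately to the variants in the remark following the theorem, e.g.\ saddle points outside the billiard). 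On the measure claim the comparison reverses: the paper proves the sharper estimate \(\mathcal{O}(\delta|\ln\delta|,\rho)\) for the excluded set (Eq.\ \ref{eq:measIexcluded}), distinguishing the \(\mathcal{O}(\delta|\ln\delta|)\) cost of neighborhoods of \(H_2\)-saddles (via Eq.\ \ref{eq:omegalowerbnd}) from the \(\mathcal{O}(\delta)\) cost of \(H_1\)-saddles, an estimate reused later (e.g.\ in Corollary \ref{cor:nonrestori-persist}), whereas your argument via \(1/\omega_2\) bounded above and below yields only the \(\mathcal{O}(1)\) statement --- sufficient for the theorem as stated, but less reusable. One small repair: your parenthetical justification that \(\omega_2\) is bounded above ``by bounded level sets'' should instead invoke compactness of the closure of the regular branches together with the simplicity (non-degeneracy) of the extrema of \(V_2\), so that \(\omega_2\) extends continuously to the centers; the bound itself is true.
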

\begin{proof}
This is a result of the property of smooth dependence on initial conditions
in ODEs and the assumed structure of the flow. For impact away from
tangency, the cross-sections $\{q_{1}=0,p_{1}<0\}$, $\{q_{1}=0,p_{1}>0\}$
are transverse to the flow. The travel time $\Delta t_{travel}(J)$
corresponds to the travel time between the former and the latter transverse
cross-sections. Since  neighborhoods of separatrices are excluded,
this travel time is finite and depends smoothly on initial
conditions. 
Finally, it follows from (\ref{eq:lengthofHi},\ref{eq:omegalowerbnd}) that for small $\delta$, the measure of the set of excluded action values satisfies:
\begin{equation}
|H_2^{-1}(\bigcup_{j=1}^ {n_{2s}}\mathcal{N}_{\delta}(H_{2}(q_{2s,j},0))\cup\bigcup_{j=1}^{n_{1s}}
\mathcal{N}_{\delta}(H-H_{1}(q_{1s,j},0)))\cup \mathcal{N}_{\rho}(I_{tan}(H))|\leq{}\mathcal{O}(\delta{}|\ln(\delta)|,\rho)\label{eq:measIexcluded}
\end{equation}
as for each neighborhood of $q_{1s,j}, j=1,...,n_{1s}$ an $\mathcal{O}(\delta)$ neighborhood of $I$ values is excluded, whereas for each neighborhood of $q_{2s,j}, j=1,...,n_{2s}$ the excluded $I$ values consist of an $\mathcal{O}(\delta|\ln(\delta)|)$ neighborhood (Eq. \ref{eq:omegalowerbnd}).
Hence, as $\delta,\rho\rightarrow{}0$, for \(H=\mathcal{O}(1)  \) as considered here, the set of regular action values \(I\in  H_2^{-1}(\mathbb{H}_{2}^{R,\delta}(H))\backslash \mathcal{N}_{\rho}(I_{tan}(H))\)  is of \(\mathcal{O}(1)\)  as claimed.
\end{proof}

Near tangency the map $\mathcal{F}_0$ is symplectic and  $C^{0}$ - continuous but
not smooth.

\begin{rem*}
For impacting trajectories, the new period time in $J$, $\tilde{T}_{1}(J)=2\int_{0}^{q_{1max}(J)}\frac{dq_{1}}{\sqrt{2(H_{1}(J)-V_{1}(q_{1}))}}$
can also be calculated for level sets near a separatrix when the
saddle point is outside the billiard (so the travel time on the
trajectory inside the billiard is finite), or, similarly, for potentials
with unbounded level sets where the billiard serves to effectively
bound the energy level set (i.e. level sets are unbounded only on
the outer side of the wall). The theorem can therefore be extended
to such cases following suitable alterations to the initial assumptions.
\end{rem*}

The return map (\ref{eq:int_impact_returnmap}) satisfies the twist condition away from the non-twist set: 
\begin{defn}
The Non-Twist set  for a given energy level  $H $ is:
\begin{equation}
\mathbf{I}_{NT}(H)=\{I\in H_2^{-1}(\mathbb{H}_{2}^{R,\delta}(H))\mid H_{1}(J(H,I))+H_{2}(I)=H,\ \frac{d}{dI}(\frac{\tilde{T}_{1}(J)}{T_{2}(I)}\cdot2\pi)=0\}
\end{equation}
\end{defn}

\begin{thm}
\label{thm:(Twist)}(The Regular-Twist set) Consider a Hamiltonian $H$ of the form Eq.
(\ref{eq:hgeneral}) with an S3B  integrable structure \(H_{int}\) and a regular wall position, with $\epsilon_{r}=\epsilon_{w}=0$. Fix \(\delta>0, \rho>0\), and consider a  \(\delta-\)regular energy level  $H<b$. Then, for sufficiently small \(\delta,\rho\), the regular-twist set    \( H_2^{-1}(\mathbb{H}_{2}^{R,\delta}(H))\backslash \mathcal{N}_{\rho}(I_{tan}(H)\cup \mathbf{I}_{NT}(H))\)
on which 
the return map (\ref{eq:int_impact_returnmap}) is a \(C^{r }\) twist map is of \(\mathcal{O}(1)\) in \(\delta,\rho\).\end{thm}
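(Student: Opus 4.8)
The plan is to reduce the twist condition to a statement about the zero set of a single scalar function and then piggyback on Theorem~\ref{thm:(Smoothness)}. By that theorem, on the regular set $\mathcal{S}_{\rho}:=H_{2}^{-1}(\mathbb{H}_{2}^{R,\delta}(H))\backslash\mathcal{N}_{\rho}(I_{tan}(H))$ the map $\mathcal{F}_{0}$ is symplectic and $C^{r}$, and $|\mathcal{S}_{\rho}|=\mathcal{O}(1)$ in $\delta,\rho$. Since $r>3$, the twist function
\[
\Xi(I):=\frac{\partial\theta'}{\partial I}=\frac{d}{dI}\,\Theta(I,J(H,I))=2\pi\,\frac{d}{dI}\!\left(\frac{\tilde{T}_{1}(J(H,I))}{T_{2}(I)}\right)
\]
is $C^{r-1}$ (hence at least $C^{2}$) on $\mathcal{S}_{\rho}$, and the map fails the twist condition precisely on the closed set $\mathbf{I}_{NT}(H)=\{\Xi=0\}$. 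First I would make $\Xi$ explicit: on the energy surface $H_{1}(J)+H_{2}(I)=H$ one has $dJ/dI=-\omega_{2}(I)/\omega_{1}(J)<0$, so $J$ is a smooth monotone function of $I$ and $\Xi$ is an explicit expression in $\tilde{T}_{1},T_{2}$ and their derivatives, well defined on each of the finitely many branches comprising $\mathcal{S}_{\rho}$.

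The heart of the argument is to control the size of the exclusion $\mathcal{N}_{\rho}(\mathbf{I}_{NT})$. Since $\mathbf{I}_{NT}$ is closed and bounded, $|\mathcal{N}_{\rho}(\mathbf{I}_{NT})|\downarrow|\mathbf{I}_{NT}|$ as $\rho\to 0$, so it suffices to show $\mathbf{I}_{NT}(H)$ has Lebesgue measure zero; in fact I would prove it is finite. The clean route is to note that on each branch the period functions $\tilde{T}_{1}(J)$ and $T_{2}(I)$ are real-analytic in the actions away from separatrices and tangency, so $\Xi$ is real-analytic there; hence on each branch either $\Xi\equiv 0$ (the globally isochronous, degenerate case, which I exclude) or $\Xi$ has only isolated zeros. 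As $\mathcal{S}_{\rho}$ is a finite union of compact intervals, $\mathbf{I}_{NT}(H)$ is then finite and $|\mathcal{N}_{\rho}(\mathbf{I}_{NT})|=\mathcal{O}(\rho)$.

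Granting this, the conclusion follows by the same bookkeeping as in Theorem~\ref{thm:(Smoothness)}. Choosing $\rho$ small enough that $\mathcal{N}_{\rho}(I_{tan})$ and $\mathcal{N}_{\rho}(\mathbf{I}_{NT})$ are disjoint, the regular-twist set inherits the $C^{r}$ symplectic structure and satisfies $|\Xi|>0$ there, so it is a $C^{r}$ twist map, with
\[
\bigl|\,\mathcal{S}_{\rho}\backslash\mathcal{N}_{\rho}(\mathbf{I}_{NT})\,\bigr|\ \ge\ |\mathcal{S}_{\rho}|-|\mathcal{N}_{\rho}(\mathbf{I}_{NT})|\ =\ \mathcal{O}(1)-\mathcal{O}(\rho)\ =\ \mathcal{O}(1),
\]
which combined with (\ref{eq:measIexcluded}) yields the stated $\mathcal{O}(1)-\mathcal{O}(\delta|\ln\delta|)-\mathcal{O}(\rho)$ measure.

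The main obstacle is precisely ruling out a thick non-twist set. For genuinely $C^{r}$ (non-analytic) potentials the zero set of the $C^{r-1}$ function $\Xi$ could a priori be nowhere dense yet of positive measure (a smoothed fat-Cantor-set phenomenon), in which case its $\rho$-neighborhood would not shrink and the $\mathcal{O}(1)$ estimate could fail. I would therefore either restrict to analytic S3B structures, where the isolated-zeros argument applies verbatim, or impose the natural non-degeneracy hypothesis that $\Xi$ is not identically zero on any subinterval and has only non-degenerate zeros ($\Xi'\neq 0$ on $\mathbf{I}_{NT}$), which again forces finiteness of $\mathbf{I}_{NT}(H)$ on the compact set $\mathcal{S}_{\rho}$ and closes the estimate.
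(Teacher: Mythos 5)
Your proposal is correct and takes essentially the same route as the paper: the paper's entire proof is the single assertion that, generically, $\mathbf{I}_{NT}(H)$ is a discrete, finite set, so removing its $\rho$-intervals leaves, for small $\rho$, an $\mathcal{O}(1)$-measure set on which the $C^{r}$ smoothness and symplecticity from Theorem \ref{thm:(Smoothness)} carry over. Your closing caveat --- that for merely $C^{r}$ (non-analytic) S3B data the zero set of the $C^{r-1}$ twist function $\Xi$ could a priori have positive measure, so one must either assume analyticity or impose non-degenerate zeros ($\Xi' \neq 0$ on $\mathbf{I}_{NT}$) --- is precisely the content the paper compresses into the word ``generically,'' so your write-up makes explicit the hypothesis the published proof leaves implicit.
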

\begin{proof}
Generically, the set  \(\mathbf{I}_{NT}(H)\) is a discrete, finite set, hence excluding its \(\rho-\)intervals leaves, for sufficiently small \(\rho\), a set of \(I \) values of measure of \(\mathcal{O}(1)\). \end{proof}

The set $\mathbf{I}_{NT}(H)\cap\{I>I_{tan}(H)\}$ corresponds to
non-impacting tori 
which are non-twist due to the underlying system, whereas
the set $\mathbf{I}_{NT}(H)\cap\{I<I_{tan}(H)\}$ corresponds\ to tori that lose their twist
 due to the impact.

Using the implicit relation $J=J(H,I)$, after some algebra, the twist condition becomes:\begin{equation}\label{eq:twist_calc}
\frac{d\Theta}{dI}=2\pi\cdot\frac{-T_{1}(J)\cdot\tilde{T}_{1}'(J)-\tilde{T}_{1}(J)\cdot T_{2}'(I)}{T_{2}^{2}(I)}\neq 0
\end{equation}
Since $T_{1}(J),\tilde{T}_{1}(J),T_{2}(I)$ are always non-negative,
a necessary condition to have a non-twist torus is that $\tilde{T}_{1}'(J)$
and $T_{2}'(I)$ have opposite signs (see also section \ref{sec:Examples}).
\begin{prop}
\label{prop:twist-periods}The non-twist set may only occur in regions where the modified periods in each d.o.f. have opposite monotonicity property:  $I\in\mathbf{I}_{NT}(H) \Rightarrow\tilde{T}'_{1}(J)\cdot T_{2}'(I)\leq0$ 
(where $ J=J(H,I)$).
\end{prop}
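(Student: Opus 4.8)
The plan is to read the statement directly off the twist formula (\ref{eq:twist_calc}), so the whole argument reduces to a sign analysis of its numerator. First I would recall how (\ref{eq:twist_calc}) is obtained, since that is where the content lives. On a fixed energy surface the action $J$ is slaved to $I$ through the implicit relation $H_{1}(J(H,I)) = H - H_{2}(I)$; differentiating in $I$ and using $H_{1}'(J)=\omega_{1}(J)$, $H_{2}'(I)=\omega_{2}(I)$, together with $\omega_{i}=2\pi/T_{i}$, gives
\[
\frac{dJ}{dI} \;=\; -\frac{H_{2}'(I)}{H_{1}'(J)} \;=\; -\frac{\omega_{2}(I)}{\omega_{1}(J)} \;=\; -\frac{T_{1}(J)}{T_{2}(I)}.
\]
Substituting this into $\dfrac{d}{dI}\!\left(2\pi\,\tilde{T}_{1}(J)/T_{2}(I)\right)$ and clearing the common positive factor $T_{2}^{2}(I)$ reproduces the numerator $-T_{1}(J)\,\tilde{T}_{1}'(J)-\tilde{T}_{1}(J)\,T_{2}'(I)$, i.e. exactly (\ref{eq:twist_calc}). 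This step only requires that we stay on a single branch of the Liouville foliation, where $H_{1}^{-1}$ is well defined and $C^{r}$, which is guaranteed on the regular set by Theorem \ref{thm:(Smoothness)}.

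Second, by definition $I\in\mathbf{I}_{NT}(H)$ means $\frac{d\Theta}{dI}=0$, and since $T_{2}^{2}(I)>0$ this is equivalent to the vanishing of the numerator, i.e.
\[
T_{1}(J)\,\tilde{T}_{1}'(J) \;=\; -\,\tilde{T}_{1}(J)\,T_{2}'(I), \qquad J=J(H,I).
\]
Third comes the sign bookkeeping. The three quantities $T_{1}(J)$, $\tilde{T}_{1}(J)$, $T_{2}(I)$ are strictly positive on the regular set: $T_{1},T_{2}>0$ because the mechanical Hamiltonians have $\omega_{i}>0$, while $\tilde{T}_{1}(J)=2\int_{0}^{q_{1max}(J)}dq_{1}/\sqrt{2(H_{1}(J)-V_{1}(q_{1}))}>0$ is the finite, positive travel time of the orbit \emph{inside} the billiard. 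Given positivity, the left-hand side of the displayed identity carries the sign of $\tilde{T}_{1}'(J)$ and the right-hand side the sign of $-T_{2}'(I)$, so the equality forces $\operatorname{sign}\tilde{T}_{1}'(J)=-\operatorname{sign}T_{2}'(I)$, hence $\tilde{T}_{1}'(J)\,T_{2}'(I)\leq 0$. The non-strict inequality absorbs the degenerate cases in which one derivative vanishes: by positivity of the periods, $\tilde{T}_{1}'(J)=0$ forces $T_{2}'(I)=0$ and vice versa, so the product is again zero.

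I do not expect a genuine obstacle here — the proposition is a direct corollary of (\ref{eq:twist_calc}) and positivity of the periods. The only two points demanding care are the derivation of $\frac{dJ}{dI}=-T_{1}/T_{2}$ on a single smooth branch (secured by Theorem \ref{thm:(Smoothness)} and the regular-wall/$\delta$-regular assumptions) and the strict positivity of $\tilde{T}_{1}$, i.e. the fact that a nontangent impact still leaves a nonzero interior travel time; both are already furnished by the S3B hypotheses and the restriction to the regular set, so the sign argument goes through without further work.
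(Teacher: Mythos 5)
Your proof is correct and follows essentially the same route as the paper: the paper also derives the proposition directly from the twist formula (\ref{eq:twist_calc}) via the implicit relation $J=J(H,I)$, noting that since $T_{1}(J),\tilde{T}_{1}(J),T_{2}(I)$ are non-negative, vanishing of the numerator forces $\tilde{T}_{1}'(J)$ and $T_{2}'(I)$ to have opposite signs. Your write-up merely makes explicit the ``some algebra'' (the computation $dJ/dI=-T_{1}/T_{2}$) and the degenerate case where one derivative vanishes, both of which are consistent with the paper's argument.
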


Finally, notice that the rotation number for the twist map (\ref{eq:int_impact_returnmap})
changes at   \(I=I_{tan}(H)\) from its impacting value   $\frac{\tilde{T}_{1}(J)}{T_{2}(I)}$ to its non-impacting value    $\frac{{T}_{1}(J)}{T_{2}(I)}$, namely  the resonance surfaces change non-smoothly at \(I=I_{tan}\).

\section{Near integrability results} \label{sec:Near-integrability-results}

In the smooth case without the impact, when $\epsilon_{r}\neq0$ is small, the usual near-integrable dynamics emerge, including the existence of KAM tori, resonances near the rational values of \(   \frac{{T}_{1}(J)}{T_{2}(I)}\) and various types of homoclinic chaos near the singular level sets \cite{Arnold2007CelestialMechanics,meyer2008introduction,meiss2007differential,haller2012chaos}. 

Utilizing the construction of the return map (\ref{eq:int_impact_returnmap}) which is an integrable,  \(C^{r}\)\  smooth, symplectic twist map for the vertical wall case, we now show that under  small perturbations \(\epsilon=(\epsilon_{r},\epsilon_w)\neq0,\) the perturbed return map, \(  \mathcal{F}_{\epsilon}\), away from the singularities, is a  \(C^{r}\)-symplectic map (the near singularities behavior will be studied elsewhere). Furthermore, we also establish that this map  is \(C^{r}\)-close to the integrable one, hence KAM theory  applies and invariant near-integrable regions in phase space can be identified. More precisely: 

\begin{thm}\label{thm:near-int}
Consider a Hamiltonian $H$ of the form Eq.
(\ref{eq:hgeneral}) with an S3B  integrable structure \(H_{int}\) and a regular wall position.
Fix \(\delta>0, \rho>0\), let ${\epsilon}=(\epsilon_{r},\epsilon_w)$ and \(\varepsilon=||\epsilon||\), and consider a  \(\delta-\)regular energy level  $H<b$. Then for \(I\in H_2^{-1}(\mathbb{H}_{2}^{R,\delta}(H))\backslash\mathcal{N}_{\rho}(I_{tan})\), for all \(\theta\),  for sufficiently small $\varepsilon_{}$ the return
map $\mathcal{F}_{\epsilon}:(I,\theta)\rightarrow(I',\theta')$ is symplectic,  $C^{r}$ smooth and $\varepsilon -C^r$ close to the unperturbed impact return map \(\mathcal{F}_{0}\) of Eq. (\ref{eq:int_impact_returnmap}). Namely, for all \((I,\theta)\) in this bounded domain, there exists \(\varepsilon_{0}(H,\delta,\rho)>0\), such that for all \(\varepsilon\in[0,\varepsilon_{0}(H,\delta,\rho))\): 
\begin{equation}
\mathcal{F}_{\epsilon}:
\begin{cases}
I'=I+\varepsilon f(I,\theta;\epsilon)\\
\theta'=\theta+\Theta(I,J(H,I))+\varepsilon g(I,\theta;\epsilon)
\end{cases}\label{eq:return_map_nearint}
\end{equation}
with $f,g$ $2\pi-$periodic in $\theta$,\ $f,g\in C^{r}$.
\end{thm}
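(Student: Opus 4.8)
The plan is to present $\mathcal{F}_{\epsilon}$ as a composition of three $C^{r}$ maps on the fixed, $\delta$-regular energy surface $\{H=h\}$ and to deduce its properties from smooth dependence of flows and elastic reflections on initial data and on the parameter $\epsilon=(\epsilon_{r},\epsilon_{w})$. Let $P^{t}_{\epsilon}$ denote the flow of the smooth Hamiltonian $H_{int}+\epsilon_{r}V_{r}$. Starting from a point of $\Sigma=\{p_{1}=0,\ \dot p_{1}<0\}$, an orbit is (i) flowed forward by $P^{t}_{\epsilon}$ until it first meets the wall $q_{1}=\epsilon_{w}Q^{w}(q_{2};\epsilon_{w})$, (ii) reflected by the elastic reflection $R_{\epsilon_{w}}$ across that wall, and (iii) flowed forward again until its first return to $\Sigma$, so that $\mathcal{F}_{\epsilon}=(\text{flow to }\Sigma)\circ R_{\epsilon_{w}}\circ(\text{flow to wall})$. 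Since the autonomous flow and the energy-preserving reflection both conserve $H$, the construction stays on $\{H=h\}$, where the unperturbed action--angle pair $(I,\theta)$ of $H_{2}$ supplies smooth coordinates on $\Sigma$. At $\epsilon=0$ this composition is precisely $\mathcal{F}_{0}$ of (\ref{eq:int_impact_returnmap}), whose $C^{r}$ smoothness and uniform bound $M_{r}(\rho,\delta)$ on the regular set were established in Theorem \ref{thm:(Smoothness)}; the whole argument upgrades that statement to joint dependence on $\epsilon$.

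For the $C^{r}$ smoothness of each $\mathcal{F}_{\epsilon}$ I would argue as follows. Because $V_{r},Q^{w}\in C^{r+1}$, the perturbed Hamiltonian vector field is $C^{r}$ and depends smoothly on $\epsilon_{r}$, so by the standard smooth dependence of ODE solutions on data and parameters $P^{t}_{\epsilon}$ is $C^{r}$ jointly in $(I,\theta,t,\epsilon)$. On the regular set $I\in H_{2}^{-1}(\mathbb{H}_{2}^{R,\delta}(H))\backslash\mathcal{N}_{\rho}(I_{tan})$ the unperturbed orbit meets the vertical wall transversally, with normal speed $|p_{1}^{w}|$ bounded away from zero uniformly --- this is exactly what excluding $\mathcal{N}_{\rho}(I_{tan})$ buys, as it removes grazing impacts. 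Transversality is an open condition, so it persists for small $\varepsilon$, and the implicit function theorem then yields a $C^{r}$ impact time and impact point, jointly in $(I,\theta,\epsilon)$; the impacting/non-impacting character of the orbit is likewise preserved away from $I_{tan}$ (for non-impacting $I>I_{tan}$ the reflection step is absent and $\mathcal{F}_{\epsilon}$ reduces to a plain flow return map, for which the claim is immediate). The reflection $R_{\epsilon_{w}}$ across the $C^{r+1}$ curve $q_{1}=\epsilon_{w}Q^{w}$ reverses the component of $p$ normal to the curve and is $C^{r}$ in the impact data and in $\epsilon_{w}$. Finally, the choice of $\Sigma$ forces a single impact per return at $\epsilon=0$; being open and uniform on the compact regular domain, this persists for small $\varepsilon$. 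Composing the three $C^{r}$ pieces yields $\mathcal{F}_{\epsilon}\in C^{r}$ with bounds uniform in $\epsilon$.

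Symplecticity follows structurally: $P^{t}_{\epsilon}$ is symplectic, and the elastic reflection at a smooth wall is a canonical transformation of the energy surface, as is classical for billiard and impact maps \cite{kozlov1991billiards,RK2014smooth}; a composition of symplectic maps is symplectic, which in particular produces the area-preserving near-identity form $I'=I+\varepsilon f,\ \theta'=\theta+\Theta+\varepsilon g$. For each fixed $\epsilon$ the remainders $f,g$ are $C^{r}$ in $(I,\theta)$ simply as $\varepsilon^{-1}$ times the difference of the two $C^{r}$ maps $\mathcal{F}_{\epsilon}$ and $\mathcal{F}_{0}$, and are $2\pi$-periodic in $\theta$ because the wall condition enters only through the $2\pi$-periodic leaf coordinate $q_{2}(\theta)$. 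The substantive claim is the uniform estimate $\|\mathcal{F}_{\epsilon}-\mathcal{F}_{0}\|_{C^{r}}=\mathcal{O}(\varepsilon)$, i.e. that $f,g$ remain bounded in $C^{r}$ as $\varepsilon\to0$; I would obtain it from a Lipschitz-in-$\epsilon$ bound for $\mathcal{F}_{\epsilon}$ in the $C^{r}$ norm, controlling $\partial_{\epsilon}\mathcal{F}_{\epsilon}$ through the variational equations of the flow together with the $\epsilon_{w}$-dependence of $R_{\epsilon_{w}}$. It is here that the full $C^{r+1}$ regularity of $V_{r},Q^{w}$ is used --- one derivative beyond what already makes $\mathcal{F}_{\epsilon}$ be $C^{r}$ --- and where care is needed with the single derivative lost at the $q_{1}$-turning point of $\tilde{T}_{1}$ (cf. Theorem \ref{thm:(Smoothness)}). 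Assembling the bounds produces the threshold $\varepsilon_{0}(H,\delta,\rho)$ below which (\ref{eq:return_map_nearint}) holds.

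The hard part is the uniformity of these estimates near tangency and under the wall deformation. Away from $\mathcal{N}_{\rho}(I_{tan})$ the normal speed at impact is bounded below, but that bound --- and hence the $C^{r}$ norms emitted by the implicit function theorem and by the turning-point integral in $\tilde{T}_{1}$ --- deteriorates as $\rho,\delta\to0$, so one must show the bounds are uniform over the whole regular domain at fixed $\rho,\delta$, and that the tilt $\epsilon_{w}Q^{w}$, which couples $p_{1}$ and $p_{2}$ at impact and thereby breaks the separability, enters $R_{\epsilon_{w}}$ smoothly and at order $\epsilon_{w}$ without spoiling them. Controlling this coupling together with the persistence, uniform in $\epsilon$, of both the transversality and the single-impact property is where the real work lies: exactly at $I=I_{tan}$ the impacts become grazing, transversality fails, and $\mathcal{F}_{\epsilon}$ is only $C^{0}$ --- which is precisely why that neighborhood is excluded from the statement.
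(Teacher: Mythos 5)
Your proposal is correct and follows essentially the same route as the paper's own proof: the same decomposition $\mathcal{F}_{\epsilon}=\Phi_{\epsilon_r}^{[t^{*}_\epsilon,t^{**}_\epsilon]}\circ S_{\epsilon_{w}}\circ\Phi_{\epsilon_r}^{[0,t^{*}_\epsilon]}$ into flow--reflection--flow, with persistence of transversality and of the single-impact property away from $\mathcal{N}_{\rho}(I_{tan})$, $C^{r}$ smoothness of the gluing map for the $C^{r+1}$ wall, and $\varepsilon$-closeness of the impact and return times. Your added remarks on symplecticity and on where the extra derivative of $V_{r},Q^{w}$ is spent to upgrade $C^{r}$ smoothness to the $\varepsilon$-$C^{r}$ estimate are consistent refinements of that same argument, not a different approach.
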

\begin{proof} 
We first show that the perturbed return map  $\mathcal{F}_{\epsilon}:(I,\theta)\rightarrow(I',\theta')$
 can be decomposed to three maps:  
\begin{equation}
\mathcal{F}_{\epsilon}=\Phi_{\epsilon_r}^{[t^{*}_\epsilon,t^{**}_\epsilon]}\circ S_{\epsilon_{w}}\circ\Phi_{\epsilon_r}^{[0,t^{*}_\epsilon]}
\end{equation}
where $\Phi$ denotes the smooth Hamiltonian flow corresponding to system without the impact \(H(\cdot;\epsilon_{r},b=0)\) which governs the dynamics before and after impact,  $t^{*}_\epsilon$ and $t^{**}_\epsilon$ denote the time of impact with the wall and the time of return to the cross-section respectively and $S$ is the impact (gluing) map. The subscript indicates the dependence on the two different types of perturbations. 

Since \(\Phi_{\epsilon_r}^{[0,t]}\) is the $C^r$ smooth Hamiltonian flow corresponding to the Hamiltonian  \(H(\cdot;\epsilon_{r},b=0)\) which is  $C^r$ close to the unperturbed smooth Hamiltonian flow \(H(\cdot;\epsilon_{r}=0,b=0)\) for any finite \(t\), by considering \(I\) values only in the regular domain of the unperturbed flow, we insure that for sufficiently small \(\epsilon_{r}\) 
 the two smooth flows are \(\epsilon_{r}\)  $-C^r$ close on the finite time interval    \([0,t^{*}_0+1]\), where \(t^{*}_0\) denotes the finite unperturbed impact time. Since, away from \(t=\{0,t_{0}^{*}\}\), the unperturbed segment of the flow for the considered regular \(I \) values  is bounded away from \(\Sigma\) and from the section \(q_1=0\), for sufficiently small \(\epsilon_{r}\)  
the same statement holds for the perturbed smooth flow. Hence, for sufficiently small \(\epsilon_{w}\) there is no crossing of \(\Sigma\) or the wall coordinate occuring at \(t\)-values in the interior of the interval \((0,t_0^*)\). It follows that 
for sufficiently small \(\varepsilon \),
the perturbed first impact with the perturbed wall occurs before the trajectory returns to \(\Sigma\), is transverse, and the perturbed travel time \(t^{*}_\epsilon\)  is finite and \(\varepsilon\)   $-C^r$ close to   \(t^{*}_0\). Hence, for sufficiently small \(\epsilon_{w}\), the gluing map $S_{\epsilon_{w}}$ is $C^r$ smooth (for a $C^{r+1}$ smooth boundary \cite{chernov2006chaotic}) regardless of the form of the deformation or tilt of the wall, and, for sufficiently small  \(\varepsilon\), the composition   \(S_{\epsilon_{w}}\circ\Phi_{\epsilon_r}^{[0,t^{*}_\epsilon]}\) is  \(\varepsilon\)   $-C^r$ close to the unperturbed composition,    \(S_{0}\circ\Phi_{0}^{[0,t^{*}_0]}\). Namely, the perturbed trajectory just after the impact is  \(\varepsilon\)    $-C^r$ close to the unperturbed trajectory after impact. 

It follows that the perturbed trajectory which is propagated by the perturbed flow  $\Phi_{\epsilon_r}$ remains  \(\varepsilon\)   -  $C^r$ close to the unperturbed impact trajectory for finite times (e.g. past the unperturbed return time to the transverse cross section \(\Sigma\)),  and hence, by similar considerations as above, cannot collide with the wall or cross \(\Sigma\) at \(t\) values which are bounded away from \(t_0^{*}\) and \(t_0^{**}\) respectively. In particular, we obtain that the perturbed return time \(t^{**}_\epsilon\) is finite and  \(\varepsilon\)   $-C^r$ close to \(t^{**}_0\). 

Summarizing,  for \(I\in H_2^{-1}(\mathbb{H}_{2}^{R,\delta}(H))\backslash\mathcal{N}_{\rho}(I_{tan})\), for all \(\theta\), the return map \(\mathcal{F}_{\epsilon}\) to \(\Sigma\) includes a single transverse collision with the perturbed wall at \(t^{*}_\epsilon=t_{0}^{*}+\mathcal{O}(\varepsilon)\) and  thus the return map is of the form $\mathcal{F}_{\epsilon}=\Phi_{\epsilon_r}^{[t^{*}_\epsilon,t^{**}_\epsilon]}\circ S_{\epsilon_{w}}\circ\Phi_{\epsilon_r}^{[0,t^{*}_\epsilon]}$ and is  \(\varepsilon\)   $-C^r$ close to the unperturbed impact return map \(\mathcal{F}_{0}=\Phi_{0}^{[t^{*}_0,t^{**}_0]}\circ S_{0}\circ\Phi_{0}^{[0,t^{*}_0]}\)   given by    Eq. (\ref{eq:int_impact_returnmap}). 
\end{proof}

 Note that the return times and closeness results statements are non-uniform in \(H\). Establishing asymptotic results for large \(H\) requires more careful analysis of the bounds and constants appearing in the proof and will be deferred to later studies.

\begin{cor}\label{cor:nonrestori-persist}
For fixed \(H\) and \(\delta,\rho>0\), consider a circle which is bounded away from separatrices, tangencies and the non-twist set, i.e. \(I_{0}\)  belongs to the closed "good" set \(I_{0}\in H_2^{-1}(\mathbb{H}_{2}^{R,\delta}(H))\backslash(\mathcal{N}_{\rho}(I_{tan})\cup\mathcal{N}_{\rho}(\mathbf{I}_{NT}(H)))\equiv{}S_g(H,\delta,\rho)\). Furthermore, assume \(\Theta(I_{0},J(H,I_{0}))/2\pi\) is \((c,\nu)\)-Diophantine\begin{equation}
\mid\Theta(I_{0},J(H,I_{0}))-\frac{2\pi{}m}{n}\mid>cn^{-\nu-1}\  \forall{}m,n\in\mathbb{Z}\label{eq:diophantine}
\end{equation}
 where $1<\nu<\frac{1}{2}(r-1)$. Then, there exists  \(\varepsilon_{1}(H,\delta,\rho;c,\nu)\) such that for all  $\varepsilon<\varepsilon_{1}$  there exists a perturbed invariant circle  $(I_{\varepsilon}(\theta),\theta)$ with rotation number $\frac{\tilde{T}_{1}(J(H,I_0))}{T_{2}(I_0)}$ which is   $\varepsilon/c$ close to the unperturbed circle \(I=I_{0}\). Furthermore, the same result is valid for small \(c\) as long as $c$ is at least of \(\mathcal{O}(\sqrt{\varepsilon})\).
\end{cor}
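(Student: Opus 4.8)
The plan is to read the statement as a direct application of the finitely-differentiable version of Moser's invariant-curve (twist) theorem, with the quantitative dependence on the Diophantine constant $c$ extracted from the standard KAM estimates. All of the structural hypotheses have already been assembled. By Theorem \ref{thm:near-int}, for $I_0\in S_g(H,\delta,\rho)$ the perturbed return map $\mathcal{F}_{\epsilon}$ is a $C^{r}$ symplectic map that is $\varepsilon$-$C^{r}$ close to the integrable map $\mathcal{F}_{0}$ of Eq. (\ref{eq:int_impact_returnmap}); being a Poincar\'e return map of an autonomous Hamiltonian flow on a fixed energy surface, it is in fact exact symplectic, which supplies the intersection property that the twist theorem requires. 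By Theorem \ref{thm:(Twist)}, the unperturbed $\mathcal{F}_{0}$ is a genuine $C^{r}$ twist map on the good set $S_g(H,\delta,\rho)$.

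First I would promote these qualitative facts to the uniform quantitative bounds that an abstract twist theorem consumes. Since $S_g$ is a closed, hence compact, set of action values obtained by deleting the $\rho$-neighborhoods of $I_{tan}$ and of the finite non-twist set $\mathbf{I}_{NT}(H)$, and since by Theorem \ref{thm:(Smoothness)} the generating data obey $\|\Theta\|_{C^{r}}<M_{r}(\rho,\delta)<\infty$ while the twist $d\Theta/dI$ of Eq. (\ref{eq:twist_calc}) is continuous and nonvanishing there, compactness yields a uniform lower bound $|d\Theta/dI|\ge\tau(H,\delta,\rho)>0$ alongside the uniform upper bound $M_{r}$. Thus on $S_g$ the map $\mathcal{F}_{0}$ is a twist map whose twist and $C^{r}$ norm are controlled by constants depending only on $(H,\delta,\rho)$, and $\mathcal{F}_{\epsilon}-\mathcal{F}_{0}$ has $C^{r}$ size $\varepsilon$.

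Next I would invoke Moser's twist theorem in the finitely-smooth category (e.g.\ the invariant-curve theorems of Moser, R\"ussmann, Herman, and Salamon--Zehnder). For a rotation number $\Theta(I_{0},J(H,I_{0}))/2\pi$ satisfying the $(c,\nu)$-Diophantine condition (\ref{eq:diophantine}) with $1<\nu<\tfrac12(r-1)$, the theorem produces, below a perturbation threshold, a perturbed invariant circle $(I_{\varepsilon}(\theta),\theta)$ carrying the same rotation number. The inequality $\nu<\tfrac12(r-1)$ is exactly the regularity budget demanded by these finitely-differentiable versions, and its very solvability forces $\tfrac12(r-1)>1$, i.e.\ $r>3$ — which is where that standing hypothesis is spent. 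Conjugating $\mathcal{F}_{\varepsilon}$ to a near-rotation of the circle and running the Newton/KAM iteration with small divisors controlled by $(c,\nu)$ then yields both the circle and the displacement estimate $|I_{\varepsilon}(\theta)-I_{0}|=\mathcal{O}(\varepsilon/c)$.

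The main obstacle, and the step deserving genuine care, is the quantitative dependence of the threshold on the Diophantine constant $c$. The standard estimates show the iteration converges provided $\varepsilon$ is small relative to $c^{2}$ (up to the controlled constants $\tau,M_{r}$), i.e.\ $\varepsilon_{1}(H,\delta,\rho;c,\nu)\sim c^{2}\tau/M_{r}$; this quadratic loss is precisely what makes the circle's displacement $\mathcal{O}(\varepsilon/c)$ rather than $\mathcal{O}(\varepsilon)$, and it is exactly why the conclusion survives for small $c$ so long as $c$ remains at least of order $\sqrt{\varepsilon}$, keeping $\varepsilon/c^{2}=\mathcal{O}(1)$ below the convergence threshold. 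I would close by noting that, for fixed $(H,\delta,\rho)$, the constants $\tau$, $M_{r}$, and the finite return-time bounds from Theorem \ref{thm:near-int} are all uniform on the compact good set, so $\varepsilon_{1}$ is well defined and positive, completing the argument.
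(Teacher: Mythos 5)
Your overall route is the same as the paper's: apply the finitely-smooth Moser/KAM twist theorem to the map of Theorem \ref{thm:near-int} on the good set, with the standard threshold $\varepsilon\lesssim c^{2}$ and displacement $\mathcal{O}(\varepsilon/c)$. The compactness argument for a uniform twist bound, the intersection property via exact symplecticity, and the observation that $\nu<\tfrac12(r-1)$ is where $r>3$ is spent are all sound and consistent with what the paper invokes implicitly.

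However, you skip the one step that the paper's proof is actually devoted to: showing that the perturbed dynamics --- both the annulus on which the twist theorem must be applied and the perturbed circle itself --- remain inside the region where $\mathcal{F}_{\epsilon}$ is defined and $C^{r}$. If $I_{0}$ lies at the boundary of $S_g(H,\delta,\rho)$, the annulus required by Moser's theorem and the displaced circle $I_{\varepsilon}(\theta)=I_{0}+\mathcal{O}(\varepsilon/c)$ poke outside $S_g(H,\delta,\rho)$, toward the tangency or separatrix neighborhoods where the map loses smoothness (near tangency it is only $C^{0}$). The paper resolves this with a nested-set argument: since $S_g(H,\delta,\rho)\subset S_g(H,\delta/2,\rho/2)$, one takes $\varepsilon<\varepsilon_{0}(H,\delta/2,\rho/2)$ so that every $I_{0}\in S_g(H,\delta,\rho)$ has a buffer $\Delta=\min(\rho/4,K_{1}\delta,K_{2}\delta|\ln(2\delta)|)$ of guaranteed smoothness, and then the hypothesis $c\geq\mathcal{O}(\sqrt{\varepsilon})$ is used precisely to force $\varepsilon/c<K\sqrt{\varepsilon}<\Delta$ by choosing $\varepsilon_{1}\leq(\Delta/K)^{2}$, confining the perturbed circle to $S_g(H,\delta/2,\rho/2)$. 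You instead read the condition $c\geq\mathcal{O}(\sqrt{\varepsilon})$ purely as keeping $\varepsilon/c^{2}$ below the KAM convergence threshold --- a legitimate and standard role for it, but not sufficient here: without the buffer argument, the conclusion that the invariant circle exists and is $\varepsilon/c$-close is unsupported for $I_{0}$ near $\partial S_g$, since nothing in your argument guarantees the map is smooth where the circle lives. Adding the nested-set confinement step would close the gap.
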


\begin{proof}
From Theorem \ref{thm:near-int}, the map  (\ref{eq:return_map_nearint}) on \(S_g(H,\delta,\rho)\) is a \(C^{r}\) perturbation of an integrable twist map. It remains to show that the perturbed dynamics remain bounded away from tangency and separatrices - if this is shown, then the above corollary follows directly from KAM type results (see \cite{Arnold2007CelestialMechanics,meyer2008introduction,de2001tutorial}) applied to the map (\ref{eq:return_map_nearint}). Indeed, notice that \(
S_g(H,\delta,\rho)\subset{}S_g(H,\delta/2,\rho/2) 
\)
so  the upper bounds on \(\varepsilon \) of Theorem \ref{thm:near-int} for these sets satisfy $\varepsilon_0(H,\delta,\rho)>\varepsilon_0(H,\delta/2,\rho/2)$. Taking $\varepsilon<\varepsilon_0(H,\delta/2,\rho/2)$, insures that if $I_0\in{}S_g(H,\delta,\rho)$ then it is at least \(\Delta\) away from the boundary of \(S_g(H,\delta/2,\rho/2)\), where  \(\Delta=\min(\rho/4,K_{1}\delta,K_{2}\delta|\ln(2\delta)|)\) and  \(K_{1,2}(H) \) are some constants depending on the unperturbed rotation rates (see Eq. \ref{eq:omegalowerbnd}, \ref{eq:measIexcluded}). It follows that the map (\ref{eq:return_map_nearint}) is smooth in at least an $\mathcal{O}(\Delta)$ neighborhood for all $I_0\in{}S_g(H,\delta,\rho)$. Hence, by KAM theory, there exists \(\varepsilon^{*}(H,\delta,\rho;c,\nu)<\varepsilon_0(H,\delta/2,\rho/2),\) such that for all \(\varepsilon<\varepsilon^*\) near every $I_0\in{}S_g(H,\delta,\rho)$ with $c,\nu$-Diophantine \(\Theta(I_{0},J(H,I_{0}))/2\pi\), there exists a perturbed invariant curve with $I_\varepsilon(\theta)=I_0+\mathcal{O}(\frac{\varepsilon}{c})$ with the same rotation number as \(I_{0}\). Since $c$ is at least of \(\mathcal{O}(\sqrt{\varepsilon})\), there exists \(K>0\) such that  \(\frac{\varepsilon}{c}<K\sqrt{\varepsilon}\). Taking \(\varepsilon<\varepsilon_1(H,\delta,\rho;c,\nu)=\min(\varepsilon^{*}(H,\delta,\rho;c,\nu),\left(\frac{\Delta }{K}\right)^{2})  \) insures that \(K\sqrt{\varepsilon}<\Delta\), so the perturbed circle remains within the regular region $S_g(H,\delta/2,\rho/2)$ in which the map  (\ref{eq:return_map_nearint}) is smooth, as required.
\end{proof}

\begin{cor}
For sufficiently small $\varepsilon$, the complement to the set of all tori $I_0$  belonging to an energy surface \(H\) and satisfying the conditions of Corollary \ref{cor:nonrestori-persist} , namely the set of tori which do not necessarily  persist under \(\varepsilon\) perturbations is  of $\mathcal{O}(\sqrt{\varepsilon},\rho,\delta\ln\delta)$. 
\end{cor}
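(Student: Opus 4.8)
The plan is to decompose the non-persisting set into four pieces, three of which are already controlled by the earlier results and the fourth of which is the standard KAM measure estimate transported from rotation-number space to action space. By Corollary \ref{cor:nonrestori-persist}, a torus $I_0$ persists provided (i) it lies in the good set $S_g(H,\delta,\rho)=H_2^{-1}(\mathbb{H}_2^{R,\delta}(H))\setminus(\mathcal{N}_\rho(I_{tan})\cup\mathcal{N}_\rho(\mathbf{I}_{NT}(H)))$ and (ii) its rotation number $\Theta(I_0,J(H,I_0))/2\pi$ is $(c,\nu)$-Diophantine with $c$ at least of $\mathcal{O}(\sqrt{\varepsilon})$. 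Hence the complement, within the action interval of the energy surface, splits as: the non-regular set $I_0\notin H_2^{-1}(\mathbb{H}_2^{R,\delta}(H))$, the tangency neighborhood $\mathcal{N}_\rho(I_{tan})$, the non-twist neighborhood $\mathcal{N}_\rho(\mathbf{I}_{NT}(H))$, and the subset of $S_g$ whose rotation number fails the Diophantine condition. I would bound each piece and sum.

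First I would dispose of the first three pieces using estimates already in hand. The non-regular set together with the tangency neighborhood was bounded in Eq. \ref{eq:measIexcluded} by $\mathcal{O}(\delta|\ln\delta|,\rho)$, the logarithmic factor arising (via Eq. \ref{eq:omegalowerbnd}) from the excluded neighborhoods of the $q_{2s,j}$ saddles. By Theorem \ref{thm:(Twist)} the non-twist set $\mathbf{I}_{NT}(H)$ is generically discrete and finite, so its $\rho$-neighborhood contributes only $\mathcal{O}(\rho)$. Together these three pieces give $\mathcal{O}(\delta|\ln\delta|,\rho)$.

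The core of the argument is the fourth piece. In a bounded rotation-number interval, the set of values failing the $(c,\nu)$-Diophantine condition (\ref{eq:diophantine}) is covered by the resonant strips $\{|\omega-m/n|<cn^{-\nu-1}\}$, whose total Lebesgue measure is bounded by $2c\sum_n n\cdot n^{-\nu-1}=2c\sum_n n^{-\nu}$, a convergent sum since $\nu>1$; hence this set has measure $\mathcal{O}(c)=\mathcal{O}(\sqrt{\varepsilon})$ when $c$ is taken at its smallest admissible order $\sqrt{\varepsilon}$. To transport this bound to action space I would invoke the twist condition: on $S_g(H,\delta,\rho)$, away from the $\rho$-neighborhood of $\mathbf{I}_{NT}(H)$, Eq. \ref{eq:twist_calc} gives $|d\Theta/dI|$ bounded below by a positive constant $\kappa(H,\delta,\rho)$ and bounded above on the compact good set, so $I\mapsto\Theta(I)/2\pi$ is a bi-Lipschitz change of variables on each twist-monotone branch. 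The measure of the non-Diophantine set in $I$ is therefore at most $\mathcal{O}(\kappa^{-1})$ times its measure in the rotation number, i.e. $\mathcal{O}(\sqrt{\varepsilon})$ for fixed $\delta,\rho$. Summing the four contributions yields the claimed bound $\mathcal{O}(\sqrt{\varepsilon},\rho,\delta|\ln\delta|)=\mathcal{O}(\sqrt{\varepsilon},\rho,\delta\ln\delta)$.

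The hard part will be the transport step: one must verify that the twist constant $\kappa(H,\delta,\rho)$ is uniformly positive on the good set after excising the $\rho$-neighborhoods of $\mathbf{I}_{NT}(H)$, and then keep track of the fact that the resulting estimate is non-uniform in the small parameters. The implied constant in the $\mathcal{O}(\sqrt{\varepsilon})$ term degrades as $\rho\to0$, where the twist degenerates and $\kappa\to0$, and as $\delta\to0$, where $\omega_2$ degenerates by Eq. \ref{eq:omegalowerbnd}. For fixed $\delta,\rho>0$ and $\varepsilon\to0$ this poses no difficulty, but it is precisely the reason the three small parameters must be treated as independent in the final order estimate rather than combined into a single quantity.
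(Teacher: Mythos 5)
Your proposal is correct and follows essentially the same route as the paper: the paper likewise splits the non-persisting set into the complement of $S_g(H,\delta,\rho)$, bounded by $\mathcal{O}(\rho,\delta\ln\delta)$ via the measure estimate in the proof of Theorem \ref{thm:(Smoothness)}, plus the resonant set inside $S_g$, which it bounds by $\mathcal{O}(\sqrt{\varepsilon})$ by direct appeal to the KAM theorem. Your explicit resonant-strip count and the twist-based bi-Lipschitz transport to action space merely fill in the standard details the paper delegates to that citation, and your closing remark on the non-uniformity in $\delta,\rho$ matches the paper's own observation that these must be at least $\mathcal{O}(\sqrt{\varepsilon})$ if sent to zero with $\varepsilon$.
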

\begin{proof}
 The complement to the set  $S_g(H,\delta,\rho)$, namely the   $\delta-$ neighborhoods of separatrices and $\rho-$ neighborhoods of tangency and non-twist tori,  are of $\mathcal{O}(\rho,\delta\ln\delta)$, see proof of Theorem \ref{thm:(Smoothness)}. For     \(I_{0}\in S_g(H,\delta,\rho)\), by Corollary \ref{cor:nonrestori-persist}, KAM theorem \cite{Arnold2007CelestialMechanics} may be applied, hence in  $S_g(H,\delta,\rho)$  the complement set is the resonant set, and its measure is of $\mathcal{O}(\sqrt{\varepsilon})$.  \end{proof}

The destroyed, resonant tori correspond to rational values of the modified rotation number $\frac{\tilde{T}_{1}(J)}{T_{2}(I)}$. Notice that the impact causes a shift in the resonant frequencies.

The excluded sets (neighborhoods of separatrices, of the tangent torus and of the non-twist tori) correspond to a finite, discrete number of singular $I(H)$ values. As $\varepsilon\rightarrow0$, the size of these sets, which is controlled by $\delta,\rho$, can be taken to tend slowly to $0$ as well. The proof of corollary \ref{cor:nonrestori-persist} which utilizes KAM theory implies that in such a case \(\delta,\rho\) must be at least of  \(\mathcal{O}(\sqrt{\varepsilon})\). Finding the optimal power in \(\varepsilon\) is left for future studies. As the system is a 2 d.o.f system, this implies that the phase space can be divided into invariant regions of motion \cite{Arnold2007CelestialMechanics}.

There are two cases in which the form of the perturbed map for $I$ may be found. 
The first of which is the case of a perpendicular wall with an additional regular perturbation -  \(\epsilon_{w}=0\), so \( \varepsilon=\epsilon_{r}\). We introduce the following notation: let $z(t)=(q_{1}(t),p_{1}(t),q_{2}(t),p_{2}(t))$ and denote the impacting trajectory in the perturbed system by \(z_{\epsilon}(t)=\Phi_{\epsilon_r}^{[0,t]}z(0)\), for \(0\le{}t<t^{*}_\epsilon\), and \(z_{\epsilon}^{im}(t)=\Phi_{\epsilon_r}^{[t^{*}_\epsilon,t]}\circ S\circ\Phi_{\epsilon_r}^{[0,t^{*}_\epsilon]}z(0)\) for $t_\epsilon^{*}\le{}t\le{}t^{**}_\epsilon$. Denote similarly by $z_0^{im}(t)$ the trajectory in the unperturbed impact system. Finally, denote by $z_0^{sm}(t)=\Phi_0^{[0,t]}z(0)$ the smooth, unperturbed, non-impacting trajectory. From Theorem \ref{thm:near-int} above, we have:

\begin{cor}
\label{cor:z_eps-vs-z_0}Consider the settings of theorem \ref{thm:near-int}, with $\epsilon_{w}=0$ and $\epsilon_r$ sufficiently small. Let $\bar{t}_{min}=\min({t_0^{*},t^{*}_{\epsilon_r}})$, $\bar{t}_{max}=\max({t_0^{*},t^{*}_{\epsilon_r}})$. Then:
\begin{equation}
z_{\epsilon_{r}}(t)=
\begin{cases}
z_{0}^{im}(t)+\epsilon_{r}z_{1}^{im}(t)+\mathcal{O}(\epsilon_{r}^{2}) & t\in[0,\bar{t}_{min}]\cup[\bar{t}_{max},t_{\epsilon_{r}}^{**}] \\
\begin{cases}
z_{0}^{sm}(t)+\epsilon_{r}z_{1}^{sm}(t)+\mathcal{O}(\epsilon_{r}^{2}) & \bar{t}_{min}=t_{0}^{*} \\ (q_{1,0}^{sm}(t+\Delta t_{travel}),p_{1,0}^{sm}(t+\Delta t_{travel}),q_{2,0}^{sm}(t),p_{2,0}^{sm}(t))+\mathcal{O}(\epsilon_{r}) & \bar{t}_{max}=t_{0}^{*}
\end{cases}
&
t\in[\bar{t}_{min},\bar{t}_{max}]
\end{cases}
\end{equation}
Where $\Delta t_{travel}$ is as in Theorem \ref{thm:(Smoothness)},
calculated for the unperturbed impacting trajectory, and $z_{1}$ solves the first variational equation along the corresponding trajectory. \end{cor}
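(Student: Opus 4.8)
The plan is to read this corollary as the trajectory-level refinement of Theorem \ref{thm:near-int}: that theorem already delivers $C^{r}$ closeness of the return map and the fact that the perturbed collision is transverse with $t^{*}_{\epsilon_{r}}=t_{0}^{*}+\mathcal{O}(\epsilon_{r})$, so the remaining work is to add the first variational term along each smooth segment and to account honestly for what happens between the two (generically distinct) impact times. Accordingly I would partition $[0,t^{**}_{\epsilon_{r}}]$ into the three windows $[0,\bar{t}_{min}]$, $[\bar{t}_{min},\bar{t}_{max}]$ and $[\bar{t}_{max},t^{**}_{\epsilon_{r}}]$, and observe that the only genuine difficulty lives in the middle window, on which exactly one of the two trajectories has already collided, so the pair sits in different ``phases'' of the hybrid flow.

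On $[0,\bar{t}_{min}]$ neither trajectory has impacted, so $z_{\epsilon_{r}}(t)=\Phi_{\epsilon_r}^{[0,t]}z(0)$ and $z_{0}^{im}(t)=z_{0}^{sm}(t)=\Phi_{0}^{[0,t]}z(0)$ both evolve under the smooth, impact-free flow. Since $H(\cdot;\epsilon_{r},b=0)$ is $C^{r}$ in $\epsilon_{r}$ and the relevant $I$ lie in the regular set, standard smooth dependence on parameters gives $z_{\epsilon_{r}}(t)=z_{0}^{sm}(t)+\epsilon_{r}z_{1}^{sm}(t)+\mathcal{O}(\epsilon_{r}^{2})$ with $z_{1}^{sm}$ solving the first variational equation along $z_{0}^{sm}$; as $z_{0}^{im}$ and $z_{0}^{sm}$ coincide here this is the asserted expansion. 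On $[\bar{t}_{max},t^{**}_{\epsilon_{r}}]$ both trajectories have collided exactly once, so by Theorem \ref{thm:near-int} the composition $\Phi_{\epsilon_r}^{[t^{*}_{\epsilon_r},\cdot]}\circ S\circ\Phi_{\epsilon_r}^{[0,t^{*}_{\epsilon_r}]}$ is a $C^{r}$ family in $\epsilon_{r}$, and differentiating in $\epsilon_{r}$ yields $z_{\epsilon_{r}}(t)=z_{0}^{im}(t)+\epsilon_{r}z_{1}^{im}(t)+\mathcal{O}(\epsilon_{r}^{2})$, where $z_{1}^{im}$ solves the variational equation along the unperturbed impact trajectory with the jump at $t_{0}^{*}$ obtained by differentiating the transverse collision condition $q_{1}=0$ and the reflection $p_{1}\mapsto-p_{1}$ in $\epsilon_{r}$. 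The transversality $|p_{1}^{w}|=\sqrt{H_{1}-V_{1}(0)}\neq0$, guaranteed by the regular wall position, is precisely what makes the collision time and this jump smooth in $\epsilon_{r}$.

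For the middle window I would treat the two cases separately. If $\bar{t}_{min}=t_{0}^{*}$ (the unperturbed trajectory collides first), then on $[t_{0}^{*},t^{*}_{\epsilon_{r}}]$ the perturbed trajectory has not yet impacted, is still $\Phi_{\epsilon_r}^{[0,t]}z(0)$, and the same smooth-dependence argument gives $z_{\epsilon_{r}}(t)=z_{0}^{sm}(t)+\epsilon_{r}z_{1}^{sm}(t)+\mathcal{O}(\epsilon_{r}^{2})$ — the reference being the smooth continuation $z_{0}^{sm}$, which no longer equals $z_{0}^{im}$ since the latter has already reflected. If instead $\bar{t}_{max}=t_{0}^{*}$ (the perturbed trajectory collides first), then on $[t^{*}_{\epsilon_{r}},t_{0}^{*}]$ the perturbed trajectory is post-impact while the reference has not yet reflected; here I would invoke the reflection/time-reversal symmetry of the separable mechanical flow to identify the unperturbed post-impact branch with the time-shifted smooth branch, $q_{1}^{im}(t)=q_{1,0}^{sm}(t+\Delta t_{travel})$ and $p_{1}^{im}(t)=p_{1,0}^{sm}(t+\Delta t_{travel})$ for $t\geq t_{0}^{*}$ (the $q_{2},p_{2}$ components being impact-invariant), which is exactly the vector appearing in the statement.

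The main obstacle — and the reason the error degrades from $\mathcal{O}(\epsilon_{r}^{2})$ to $\mathcal{O}(\epsilon_{r})$ on this last sub-window — is that the two trajectories are misaligned in their collision events: the perturbed one has just reflected and moves right with $q_{1}=\mathcal{O}(\epsilon_{r})>0$, whereas the shifted smooth branch evaluated at $t<t_{0}^{*}$ is still on the returning outer arc with $q_{1}=\mathcal{O}(\epsilon_{r})<0$, so no first-order correction can reconcile them and only a $C^{0}$ bound survives. That bound is obtained from the facts that the window has length $t_{0}^{*}-t^{*}_{\epsilon_{r}}=\mathcal{O}(\epsilon_{r})$, the $q_{1}$-velocity $|p_{1}^{w}|$ is bounded and bounded away from zero, and both states are $\mathcal{O}(\epsilon_{r})$ perturbations of the common configuration $(0,|p_{1}^{w}|,q_{2},p_{2})$ attained at $t\approx t_{0}^{*}$, so $z_{\epsilon_{r}}(t)$ equals the stated vector up to $\mathcal{O}(\epsilon_{r})$. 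Checking that the endpoint values glue continuously to the $\mathcal{O}(\epsilon_{r}^{2})$ expansions on the adjacent windows, and that the variational term $z_{1}^{im}$ built on the post-impact window matches at $\bar{t}_{max}$, is where most of the remaining bookkeeping lies.
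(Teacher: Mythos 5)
Your proposal is correct and takes essentially the same route as the paper, which states Corollary \ref{cor:z_eps-vs-z_0} as an immediate consequence of Theorem \ref{thm:near-int} with no separate written proof: your three-window decomposition, the first-variational expansions on the outer windows (with the saltation-type jump at the transverse impact), and the identification of the unperturbed post-impact branch with the $\Delta t_{travel}$-shifted smooth branch are exactly the bookkeeping the paper leaves implicit. Your explanation of why the accuracy degrades to $\mathcal{O}(\epsilon_{r})$ only in the sub-case $\bar{t}_{max}=t_{0}^{*}$ --- the two states straddle the wall with an $\mathcal{O}(1)$ mismatch in collision phase, so only a $C^{0}$ bound over the $\mathcal{O}(\epsilon_{r})$-length window survives --- correctly captures the content of the statement.
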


In other words, for $\varepsilon=\epsilon_r$, the perturbed trajectory $z_{\epsilon_{r}}(t)$ and the unperturbed trajectory $z_{0}^{im}(t)$ are $\mathcal{O}(\epsilon_{r})$ close except for an $\mathcal{O}(\epsilon_{r})$ time interval where one trajectory has already undergone impact and the other has not, in which case the perturbed trajectory can be approximated by the respective continuation of the unperturbed trajectory outside the billiard. 

\begin{figure}
\begin{centering}
\includegraphics[scale=0.5]{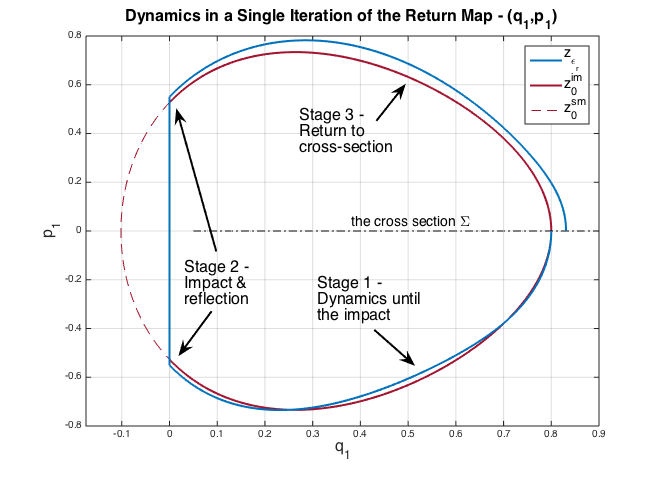}
\par\end{centering}

\protect\caption{\label{fig:singleit-dyn}The perturbed impacting trajectory $z_{\epsilon_{r}}$
(blue), unperturbed impacting trajectory $z_{0}^{im}$ (crimson) and unperturbed smooth
trajectory $z_{0}^{sm}$ (dashed crimson) during a single iteration
of the return map to the cross-section $\Sigma$, projected to the
$(q_{1},p_{1})$ phase space, for the case of a perpendicular wall and a small regular perturbation. The wall is at $q_{1}=0$.}
\end{figure}

\begin{thm}\label{thm:(Return-map-regpert)}
Consider a Hamiltonian $H$ of the form Eq. (\ref{eq:hgeneral}) with an S3B  integrable structure \(H_{int}\) and a regular wall position, with $\epsilon_w=0$.
Fix \(\delta>0, \rho>0\), and consider a  \(\delta-\)regular energy level  $H$. Then for \(I\in H_2^{-1}(\mathbb{H}_{2}^{R,\delta}(H))\backslash\mathcal{N}_{\rho}(I_{tan})\), for all \(\theta\),  for sufficiently small  $\epsilon_{r}$, the function \(f\) of  the change in $I$ in the return map (\ref{eq:return_map_nearint}) has the following form:
\begin{equation}\label{eq:return_map_perpnearint}
f(I,\theta;\epsilon_r)=\frac{1}{\omega_2(I)}\int_{0}^{\tilde T_1(J(I,H))}
\left(\frac{\partial{}V_r}{\partial{}q_2} \ p_{2}\right)_{z_0^{im}(t)}dt
+\mathcal{O}(\epsilon_r)
\end{equation}
\end{thm}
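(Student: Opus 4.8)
The plan is to obtain the leading-order change in the action $I$ by integrating its slow time derivative along the unperturbed impacting orbit (a first-order averaging / Melnikov computation) and to argue that the vertical impact contributes nothing to this change. First I would write the evolution of $I$ under the smooth perturbed flow $\Phi_{\epsilon_r}$. Since $I=I(H_2)$ on the relevant branch with $dH_2/dI=\omega_2(I)$, the chain rule gives $\dot I=\dot H_2/\omega_2(I)$, and because $H_2$ Poisson-commutes with $H_1$ and with itself,
\[
\dot H_2=\{H_2,H\}=\epsilon_r\{H_2,V_r\}=-\epsilon_r\,p_2\,\frac{\partial V_r}{\partial q_2}.
\]
Hence along the perturbed flow $\dot I=-\frac{\epsilon_r}{\omega_2(I)}\,p_2\,\frac{\partial V_r}{\partial q_2}$, which is $\mathcal{O}(\epsilon_r)$ on the regular set, where $\omega_2$ is bounded below by (\ref{eq:omegalowerbnd}) and $p_2,\partial_{q_2}V_r$ are bounded.

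Next I would account for the impact. For a vertical wall ($\epsilon_w=0$) the gluing map $S_0$ is the elastic reflection $p_1\mapsto-p_1$, which leaves $(q_2,p_2)$ — hence $H_2$ and $I$ — unchanged. Consequently $I$ is continuous across the collision and its only variation over one return to $\Sigma$ comes from the smooth segments of the flow. Writing the return as $\mathcal{F}_{\epsilon_r}=\Phi_{\epsilon_r}^{[t^{*}_\epsilon,t^{**}_\epsilon]}\circ S_0\circ\Phi_{\epsilon_r}^{[0,t^{*}_\epsilon]}$ as in Theorem \ref{thm:near-int}, the total change is
\[
\Delta I=I'-I=\int_0^{t^{**}_\epsilon}\dot I\,dt=-\frac{\epsilon_r}{\omega_2(I)}\int_0^{t^{**}_\epsilon}\left(p_2\,\frac{\partial V_r}{\partial q_2}\right)_{z_{\epsilon_r}(t)}dt,
\]
the integrand being continuous across $t^{*}_\epsilon$ since $q_1=0$, $q_2$, $p_2$ do not jump at impact.

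Then I would reduce the integral to the unperturbed impacting orbit to leading order. By Corollary \ref{cor:z_eps-vs-z_0}, $z_{\epsilon_r}(t)=z_0^{im}(t)+\mathcal{O}(\epsilon_r)$ on $[0,\bar t_{min}]\cup[\bar t_{max},t^{**}_\epsilon]$, and the perturbed return time obeys $t^{**}_\epsilon=\tilde T_1(J(I,H))+\mathcal{O}(\epsilon_r)$, because the unperturbed return time to $\Sigma$ equals exactly the modified period $\tilde T_1$ entering (\ref{eq:int_impact_returnmap}). Replacing $z_{\epsilon_r}$ by $z_0^{im}$ in the integrand changes it by $\mathcal{O}(\epsilon_r)$, and replacing $t^{**}_\epsilon$ by $\tilde T_1$ changes the integration range by $\mathcal{O}(\epsilon_r)$; since the integrand is already $\mathcal{O}(\epsilon_r)$, both substitutions cost only $\mathcal{O}(\epsilon_r^2)$. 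Finally $\omega_2(I(t))=\omega_2(I)+\mathcal{O}(\epsilon_r)$ may be pulled out of the integral. Dividing by $\epsilon_r$ gives (\ref{eq:return_map_perpnearint}) (the overall sign being fixed by the orientation convention adopted for the action-angle pair $(I,\theta)$).

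The main obstacle I anticipate is the short window $[\bar t_{min},\bar t_{max}]$ of length $\mathcal{O}(\epsilon_r)$ around the impact, where the perturbed and unperturbed orbits differ by $\mathcal{O}(1)$ in the $(q_1,p_1)$-plane (one has already reflected, the other has not), so the integrand cannot be linearized there and Corollary \ref{cor:z_eps-vs-z_0} only provides $\mathcal{O}(\epsilon_r)$-order matching of the continuations. The point to verify is that this window contributes only $\mathcal{O}(\epsilon_r^2)$ to $\Delta I$: the integrand is uniformly $\mathcal{O}(\epsilon_r)$ (bounded $p_2,\partial_{q_2}V_r$, and $\omega_2$ bounded below away from separatrices and tangency by (\ref{eq:omegalowerbnd}) and the exclusion of $\mathcal{N}_\rho(I_{tan})$), while the window length is $\mathcal{O}(\epsilon_r)$, so their product is negligible at the order computed. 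The remaining estimates are the routine first-variation bounds supplied by smooth dependence on initial conditions over the finite interval $[0,\tilde T_1+1]$.
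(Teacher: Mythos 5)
Your proposal is correct and follows essentially the same route as the paper: both integrate the $\mathcal{O}(\epsilon_r)$ drift of $I$ over one return, use that the vertical-wall reflection preserves $(q_2,p_2)$ and hence $I$, replace the perturbed orbit by $z_0^{im}$ via Corollary \ref{cor:z_eps-vs-z_0}, and absorb the $\mathcal{O}(\epsilon_r)$-long mismatch window around the impact time into the $\mathcal{O}(\epsilon_r^2)$ error exactly as you argue (the paper does this implicitly when splitting $\int_0^{t^{*}_{\epsilon_r}}=\int_0^{t_0^{*}}+\int_{t_0^{*}}^{t^{*}_{\epsilon_r}}$ and similarly after impact). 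The only cosmetic differences are bookkeeping --- you compute $\dot I=\dot H_2/\omega_2$ in the $(q_2,p_2)$ variables while the paper works with $\{I,V_r\}=\partial_\theta V_r$ in action-angle coordinates before converting back to $\partial_{q_2}V_r\,p_2/\omega_2(I)$ --- and the overall sign, which, as you correctly note, depends only on the Poisson-bracket/orientation convention.
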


\begin{proof}
Consider the evolution in time of \(I\) under the perturbed system before, during and after impact (see Figure \ref{fig:singleit-dyn}). Before and after impact the motion is described by the smooth, near integrable Hamiltonian \(H(\cdot;\epsilon_{r},b=0)\) (Theorem \ref{thm:near-int} and Corollary \ref{cor:z_eps-vs-z_0}) and at impact, as the wall is perpendicular, \(I\) is unchanged. With no loss of generality, we consider the case  $\bar{t}_{min}=t_{0}^{*}<t_{\epsilon_{r}}^{*}$ (the other case may be similarly treated). 
 The evolution of \(I\) before the perturbed impact time \(t_{\epsilon_{r}}^{*}\) may be approximated by the evolution along the unperturbed trajectory until time \(t_{0}^{*}\):
\begin{equation}
\begin{split}
I^{*}&=I(t_{\epsilon_{r}}^{*})=I+\int_{0}^{t_{\epsilon_{r}}^{*}}\{I,H_{int}+\epsilon_{r}V_{r}\}\mid_{z_{\epsilon_{r}}(t)}dt=I+\int_{0}^{t_{0}^{*}}+\int_{t_{0}^{*}}^{t_{\epsilon_{r}}^{*}}\{I,\epsilon_{r}V_{r}\}\mid_{z_{\epsilon_{r}}(t)}dt \\
&= I+\epsilon_{r}\int_{0}^{t_{0}^{*}}\{I,V_{r}\}\mid_{z^{im}_{0}(t)}dt+\mathcal{O}(\epsilon_r^2) 
\end{split}
\end{equation}
where we used  $\{I,H_{int}\}=0$,  Theorem \ref{thm:near-int} and Corollary \ref{cor:z_eps-vs-z_0}. $I^{*}$ does not change at impact, and the evolution back to the cross-section $\Sigma$ after the impact may be calculated similarly:
\begin{equation}
\begin{split}
I'&=I(t_{\epsilon_{r}}^{**})=I(t_{\epsilon_{r}}^{*})+\int_{t_{\epsilon_{r}}^{*}}^{t_{\epsilon_{r}}^{**}}\{I,H_{int}+\epsilon_{r}V_{r}\}\mid_{z_{\epsilon_{r}}(t)}dt \\
&=I^{*}+\int_{t_{\epsilon_{r}}^{*}}^{t_{0}^{**}}\{I,\epsilon_{r}V_{r}\}\mid_{z_{\epsilon_{r}}(t)}dt+\int_{t_{0}^{**}}^{t_{\epsilon_{r}}^{**}}\{I,\epsilon_{r}V_{r}\}\mid_{z_{\epsilon_{r}}(t)}dt \\
&=I^{*}+\epsilon_{r}\int_{t_{\epsilon_{r}}^{*}}^{t_{0}^{**}}\{I,V_{r}\}\mid_{z^{im}_{0}(t)}dt+\mathcal{O}(\epsilon_r^{2})=I^{*}+\epsilon_{r}\int_{t_{0}^{*}}^{t_{0}^{**}}\{I,V_{r}\}\mid_{z_{0}^{im}(t)}dt+\mathcal{O}(\epsilon_r^2).
\end{split}
\end{equation}
Finally,  substituting $\{I,V_r\}=\frac{\partial{}V_r}{\partial{}\theta}$ and since $t_{0}^{**}=\tilde{T}_{1}(J)$ (see (\ref{eq:int_impact_returnmap})):
\begin{equation}
\begin{split}
I'&=I+\epsilon_r\int_{0}^{t_{0}^{**}}\frac{\partial{}V_r}{\partial{}\theta}\mid_{z_{0}^{im}(t)}dt+\mathcal{O}(\epsilon_r^2)=I+\epsilon_r\int_{0}^{t_{0}^{**}}\frac{\partial{}V_r}{\partial{}q_2}\cdot\frac{\partial{}q_2}{\partial{}\theta}\mid_{z_{0}^{im}(t)}dt+\mathcal{O}(\epsilon_r^2) \\
&=I+\epsilon_r\int_{0}^{\tilde{T}_{1}(J)}\frac{\partial{}V_r}{\partial{}q_2}\cdot\frac{\dot{q_2}}{\dot{\theta}}\mid_{z_{0}^{im}(t)}dt+\mathcal{O}(\epsilon_r^2)=I+\epsilon_r\frac{1}{\omega_2(I)}\int_{0}^{\tilde{T}_{1}(J)}\left(\frac{\partial{}V_r}{\partial{}q_2} \ p_{2}\right)_{z_0^{im}(t)}dt+\mathcal{O}(\epsilon_r^2)
\end{split}
\end{equation}
\end{proof}

The other case in which explicit form of the leading order term in $I$ can be written, is the case of a tilted, near perpendicular straight wall and a small regular perturbation ($q_1^w=\epsilon_{w}Q^w(q_2^w)=\epsilon_{w}q_2^w$,  $\epsilon_w,\epsilon_r$ small). In fact,  we show next that by rotating the coordinate system\ this case reduces to an example of the previous one. Consider first a tilted wall with no additional perturbation to the potential, so  $\epsilon_r=0$ and $q_1^w=\epsilon_{w}q_2^w$, $\epsilon_w$ small. The symplectic change of coordinates, of rotating the axes by
$\alpha=\arctan(\epsilon_w)$: 
\begin{equation}
\tilde{q}=Rq,\ \tilde{p}=Rp, R=\left(\begin{array}{cc}
\cos\alpha & -\sin\alpha\\
\sin\alpha & \cos\alpha
\end{array}\right)
\end{equation}
makes the wall perpendicular to the new $\tilde{q_{1}}$ axis, i.e.
$\tilde{Q}^w=Q^w(\tilde{q_2}^w)=0$. 
Substituting the new coordinates in the Hamiltonian, we obtain:
\begin{multline}
\tilde{H}(\tilde{q_{1}},\tilde{q_{2}},\tilde{p_{1}},\tilde{p_{2}})=H(\cos\alpha\cdot \tilde{q_{1}}+\sin\alpha\cdot \tilde{q_{2}},-\sin\alpha\cdot \tilde{q_{1}}+\cos\alpha\cdot \tilde{q_{2}},\cos\alpha\cdot \tilde{p_{1}}+\sin\alpha\cdot \tilde{p_{2}},-\sin\alpha\cdot \tilde{p_{1}}+\cos\alpha\cdot \tilde{p_{2}}) \\
=\frac{\tilde{p_{1}}^{2}}{2}+\frac{\tilde{p_{2}}^{2}}{2}+V_{1}(\cos\alpha\cdot \tilde{q_{1}}+\sin\alpha\cdot \tilde{q_{2}})+V_{2}(-\sin\alpha\cdot \tilde{q_{1}}+\cos\alpha\cdot \tilde{q_{2}})+b\cdot V_{b}(\tilde{q_{1}},\tilde{q_{2}})
\end{multline}

$V_{1,2}$ are $C^{r+1}$ functions and therefore can be expanded around $\tilde{q_{1}},\tilde{q_{2}}$ respectively:
\begin{equation}
\begin{split}
V_{1}(q_1)&=V_{1}(\cos\alpha\cdot \tilde{q_{1}}+\sin\alpha\cdot \tilde{q_{2}})=V_{1}(\tilde{q_{1}})+(\tilde{q_{1}}-(\cos\alpha\cdot \tilde{q_{1}}+\sin\alpha\cdot \tilde{q_{2}}))\cdot V_1'(\tilde{q_{1}}) \\
&+(\tilde{q_{1}}-(\cos\alpha\cdot \tilde{q_{1}}+\sin\alpha\cdot \tilde{q_{2}}))\cdot h_{1}(\cos\alpha\cdot \tilde{q_{1}}+\sin\alpha\cdot \tilde{q_{2}}) \\
V_{2}(q_2)&=V_2(-\sin\alpha\cdot \tilde{q_{1}}+\cos\alpha\cdot \tilde{q_{2}})=V_{2}(\tilde{q_{2}})+(\tilde{q_{2}}-(-\sin\alpha\cdot \tilde{q_{1}}+\cos\alpha\cdot \tilde{q_{2}}))\cdot V_{2}'(\tilde{q_{2}}) \\
&+(\tilde{q_{2}}-(-\sin\alpha\cdot \tilde{q_{1}}+\cos\alpha\cdot \tilde{q_{2}}))\cdot h_{2}(-\sin\alpha\cdot \tilde{q_{1}}+\cos\alpha\cdot \tilde{q_{2}})
\end{split}
\label{eq:rotation_expansion}
\end{equation}
Where $h_{1},h_{2}\rightarrow0$ as $\epsilon_w\rightarrow0$. 
For $\epsilon_w$ small, the trigonometric functions can also be expanded:
\begin{equation}
\begin{split}
V_{1}(\cos\alpha\cdot \tilde{q_{1}}+\sin\alpha\cdot \tilde{q_{2}})&=V_{1}(\tilde{q_{1}})+\epsilon_w V_{1,rem}(\tilde{q_{1}},\tilde{q_{2}};\epsilon_w) \\
V_{2}(-\sin\alpha\cdot \tilde{q_{1}}+\cos\alpha\cdot \tilde{q_{2}})&=V_{2}(\tilde{q_{2}})+\epsilon_w V_{2,rem}(\tilde{q_{1}},\tilde{q_{2}};\epsilon_w)\\ 
V_{rem}(\tilde{q_{1}},\tilde{q_{2}};\epsilon_w)&=V_{1,rem}(\tilde{q_{1}},\tilde{q_{2}};\epsilon_w)+V_{2,rem}(\tilde{q_{1}},\tilde{q_{2}};\epsilon_w)
\end{split}
\end{equation}
where \(V_{rem}(\cdot)\) is \(C^{r}\) and  in particular bounded on the perturbed energy surface (though non-uniformly in \(H\)).
The form of the Hamiltonian in the new rotated coordinates is
\begin{equation}
\tilde H(\tilde{q_{1}},\tilde{q_{2}},\tilde{p_{1}},\tilde{p_{2}})=\frac{\tilde{p_{1}}^{2}}{2}+\frac{\tilde{p_{2}}^{2}}{2}+V_{1}(\tilde{q_{1}})+V_{2}(\tilde{q_{2}})+\epsilon_w V_{rem}(\tilde{q_{1}},\tilde{q_{2}};\epsilon_w)+b\cdot V_{b}(\tilde{q_{1}};0)
\end{equation}
where, using the expansion in (\ref{eq:rotation_expansion}), we have:
\begin{equation}
V_{rem}=-\tilde{q_{2}}\cdot V_1'(\tilde{q_{1}})+\tilde{q_{1}}\cdot V_2'(\tilde{q_{2}})+\mathcal{O}(\epsilon_w)
\end{equation}
namely, the integrable part of \(\tilde H\), in the rotated coordinates, is exactly \(H_{int}\). Notice that due to the expansion, the smoothness of the leading order perturbation term is reduced by one. We establish:
\begin{cor}
\label{cor:nearperpint-equiv-perpnearint} For ${H},\epsilon_w$ and initial conditions $(\tilde{I},\tilde{\theta})$ which satisfy the assumptions of Theorem \ref{thm:near-int} with $r>4$, an impact by a near perpendicular straight wall is equivalent to the system with impact with a perpendicular wall and a small, regular
perturbation. Moreover, the form of the change in \(\tilde{I}\) due to the wall tilt becomes  (see Theorem \ref{thm:(Return-map-regpert)}): 
\begin{equation}\label{eq:return_map_nearperpint}
f(\tilde{I},\tilde{\theta};\epsilon_w)=\frac{1}{\omega_2(\tilde{I})}\int_{0}^{\tilde{T}_{1}(\tilde{J})}\bigg([-V_1'(\tilde{q_{1}})+\tilde{q_{1}}\cdot V_2''(\tilde{q_{2}})] \ \tilde{p_{2}}\bigg)_{\tilde{z_0}^{im}(t)}dt +\mathcal{O}(\epsilon_w)
\end{equation}
\end{cor}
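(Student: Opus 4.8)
The plan is to exploit the symplectic rotation set up just above the corollary to reduce the tilted-wall problem to the perpendicular-wall-with-regular-perturbation problem already solved in Theorem \ref{thm:(Return-map-regpert)}, and then simply read off the leading-order term. First I would note that the linear map $(\tilde q,\tilde p)=(Rq,Rp)$ is a genuine canonical transformation: rotating the configuration and momentum pairs by the same angle $\alpha=\arctan(\epsilon_w)$ preserves the form $dq_1\wedge dp_1+dq_2\wedge dp_2$, since $R^{\mathsf T}R=I$. Hence the tilde dynamics are again Hamiltonian, generated by $\tilde H$. The computation preceding the corollary shows that in these coordinates the wall sits at $\tilde{q_1}=0$ (perpendicular to the new $\tilde{q_1}$-axis and through the origin) and that $\tilde H=H_{int}(\tilde q,\tilde p)+\epsilon_w V_{rem}(\tilde q;\epsilon_w)+b\,V_b(\tilde{q_1};0)$ with $V_{rem}=-\tilde{q_2}V_1'(\tilde{q_1})+\tilde{q_1}V_2'(\tilde{q_2})+\mathcal{O}(\epsilon_w)$. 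Thus the tilde system is literally an instance of (\ref{eq:hgeneral}) with a perpendicular wall, \emph{unchanged} S3B integrable part $H_{int}$, and effective regular coupling $V_r:=V_{rem}$, $\epsilon_r:=\epsilon_w$, and zero wall tilt. This establishes the claimed equivalence.

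Next I would verify that every hypothesis of Theorems \ref{thm:near-int} and \ref{thm:(Return-map-regpert)} transfers. The integrable part is exactly $H_{int}$, so it remains S3B; the wall $\tilde{q_1}=0$ passes through the origin, and the regular-wall-position condition of Def. \ref{def:regular-wall-position} is identical to the original one, being stated in terms of $V_1$ and its extrema which are untouched by the rotation. Since $(\tilde I,\tilde\theta)$ are the action-angle variables of $H_{int}$ in the tilde coordinates, the requirement that $(\tilde I,\tilde\theta)$ lie in the regular-twist set is exactly the assumption inherited from Theorem \ref{thm:near-int}, and the excluded neighborhoods of tangency, separatrices and the non-twist set are invariant under the rigid rotation. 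The one genuinely new point is smoothness: because $V_{rem}$ arises from a first-order Taylor expansion of the $C^{r+1}$ potentials $V_{1,2}$, it is only $C^r=C^{(r-1)+1}$. Theorem \ref{thm:(Return-map-regpert)} (through Theorem \ref{thm:near-int}) requires the coupling potential to be $C^{r'+1}$ with $r'>3$; applying it with $r'=r-1$ forces $r-1>3$, i.e. $r>4$, precisely the hypothesis stated in the corollary.

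I would then invoke Theorem \ref{thm:(Return-map-regpert)} for the rotated system, which yields
\[
f(\tilde I,\tilde\theta;\epsilon_w)=\frac{1}{\omega_2(\tilde I)}\int_0^{\tilde T_1(\tilde J)}\left(\frac{\partial V_{rem}}{\partial \tilde{q_2}}\,\tilde{p_2}\right)_{\tilde z_0^{im}(t)}dt+\mathcal{O}(\epsilon_w),
\]
where $\tilde z_0^{im}$ is the unperturbed impacting trajectory of $H_{int}$ against the perpendicular wall. Differentiating $V_{rem}=-\tilde{q_2}V_1'(\tilde{q_1})+\tilde{q_1}V_2'(\tilde{q_2})+\mathcal{O}(\epsilon_w)$ in $\tilde{q_2}$ gives $\partial_{\tilde{q_2}}V_{rem}=-V_1'(\tilde{q_1})+\tilde{q_1}V_2''(\tilde{q_2})+\mathcal{O}(\epsilon_w)$. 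Because $f$ is already the $\mathcal{O}(\epsilon_w^0)$ coefficient of the $\tilde I$-increment, the $\epsilon_w$-dependent parts of $V_{rem}$ contribute only to the $\mathcal{O}(\epsilon_w)$ remainder, and substitution produces exactly (\ref{eq:return_map_nearperpint}).

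The main obstacle I anticipate is the careful bookkeeping of smoothness and of the self-referential $\epsilon_w$-dependence of the effective perturbation. One must confirm that the first-order computation in the proof of Theorem \ref{thm:(Return-map-regpert)} — which expands the flow to leading order and integrates $\{I,V_r\}$ along the unperturbed impacting orbit — remains valid when $V_r=V_{rem}$ is merely $C^r$ (so that $C^{r-1}$ regularity enters the variational estimates), and that the residual $\epsilon_w$-dependence of $V_{rem}$ genuinely feeds only the error term rather than the leading integrand. Beyond tracking the derivative count that yields $r>4$ and checking that the regular/excluded sets are preserved by the rotation, the argument is essentially a direct application of the earlier theorem.
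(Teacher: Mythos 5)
Your proposal is correct and follows essentially the same route as the paper: the symplectic rotation by $\alpha=\arctan(\epsilon_w)$, the Taylor expansion yielding $\tilde H=H_{int}+\epsilon_w V_{rem}+b\,V_b(\tilde{q_1};0)$ with $V_{rem}=-\tilde{q_2}V_1'(\tilde{q_1})+\tilde{q_1}V_2'(\tilde{q_2})+\mathcal{O}(\epsilon_w)$, and the application of Theorem \ref{thm:(Return-map-regpert)} with $V_r:=V_{rem}$ are exactly the paper's derivation preceding the corollary, including the observation that the one-derivative loss in $V_{rem}$ forces $r>4$. Your explicit verification that the rotation is canonical, that the regular-wall and regular-set hypotheses transfer, and that the $\epsilon_w$-dependent part of $V_{rem}$ feeds only the $\mathcal{O}(\epsilon_w)$ remainder merely spells out steps the paper leaves implicit.
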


Similarly, when both $\epsilon_w, \epsilon_r\neq0$ and are of the same order, one  finds that the Hamiltonian in the rotated coordinates corresponds to a system with a perpendicular wall and a small regular perturbation, comprised of a rotation term $\epsilon_wV_{rem}(\tilde{q_1},\tilde{q_2})$ and the original regular perturbation $\epsilon_r\tilde{V}_r(\tilde{q_1},\tilde{q_2};\epsilon_r,\epsilon_w)$. Using a similar expansion to (\ref{eq:rotation_expansion}), we have:
\begin{equation}\label{eq:taylor_regpertrotated}
V_r(q_1,q_2;\epsilon_r)=V_r(\tilde{q_1},\tilde{q_2};\epsilon_r)+\epsilon_w\Big(-\tilde{q_2}\frac{\partial{}V_r}{\partial{}q_1}(\tilde{q_1},\tilde{q_2})+\tilde{q_1}\frac{\partial{}V_r}{\partial{}q_2}(\tilde{q_1},\tilde{q_2})\Big)+\mathcal{O}(\epsilon_w^2):=\tilde{V}_r(\tilde{q_1},\tilde{q_2};\epsilon_r,\epsilon_w)
\end{equation}
Assuming that  $\epsilon_r=c_r\epsilon,\ \epsilon_w=c_w\epsilon$, the form of the change in  \(\tilde{I}\) becomes (see Theorem \ref{thm:(Return-map-regpert)}, equations (\ref{eq:rotation_expansion},\ref{eq:taylor_regpertrotated})):
\begin{equation}\label{eq:return_map_nearperpnearint}
f(\tilde{I},\tilde{\theta};\epsilon)=\frac{1}{\omega_2(\tilde{I})}\int_{0}^{\tilde{T}_{1}(\tilde{J})}\bigg(\Big[c_w[-V_1'(\tilde{q_{1}})+\tilde{q_{1}}\cdot V_2''(\tilde{q_{2}})]+c_r\frac{\partial{}V_r}{\partial{}\tilde{q_2}}\Big] \ \tilde{p_{2}}\bigg)_{\tilde{z_0}^{im}(t)}dt +\mathcal{O}(\epsilon).
\end{equation}

\subsubsection*{Soft impacts}
 For physical setups in which bodies at close range experience strong repulsion forces (e.g. the repelling forces between two colliding atoms) \cite{rapoport2007approximating,rom2012billiards,lerman2012saddle,RK2014smooth}, a better model for the strong repulsion than the singular hard-wall billiard potential is  a smooth steep potential. Hence, consider Hamiltonian systems similar to those discussed above, where the hard billiard is replaced by a smooth potential whose softness is controlled by a small parameter $\epsilon_b$:
\begin{equation}
H=H(\cdot;\epsilon_{r},\epsilon_{w},\epsilon_{b},q^{w},b)=H_{int}(q_{1},p_{1},q_{2},p_{2})+\epsilon_{r}V_{r}(q_{1},q_{2})+ b\cdot{}V_{b}(q_{};\epsilon_{w},\epsilon_{b})\label{eq:hsoft}
\end{equation}
As $\epsilon_{b}\rightarrow{}0$, the smooth ($C^{r+1}$) billiard potential $V_b(\cdot,\epsilon_b)$ becomes steeper at the wall (as \(q-q^{w}\rightarrow0^+\)) and approaches the singular hard wall limit. For example, one can choose (see  \cite{rapoport2007approximating,RK2014smooth} for additional examples):\begin{equation}
V_{b,poly}(q_{};\epsilon_{w},\epsilon_{b})=\frac{\epsilon_{b}}{q_1-\epsilon_{w}Q^{w}(q_{2};\epsilon_{w})^{}}
\label{eq:infinitebarrier}\end{equation}
\begin{equation}
V_{b,exp}(q_{};\epsilon_{w},\epsilon_{b})=\exp\left(-\frac{q_1-\epsilon_{w}Q^{w}(q_{2};\epsilon_{w})}
{\epsilon_{b}}\right)\label{eq:finitebarrier}
\end{equation} Notice that in particular, on the wall, \(\lim _{q\rightarrow q^{w}}b\cdot{}V_{b}(q;\epsilon_{w},\epsilon_{b})\geq b\) (this limit, which corresponds to the "barrier height", is infinite for the potential \(V_{b,poly}\) and finite for \(V_{b,exp}\)). It has been shown \cite{rapoport2007approximating,RK2014smooth} that under some natural conditions on $V_b$, for trajectory segments that are bounded away from tangencies and have energies which are not too large (so they cannot cross the boundary), the smooth, soft impact flow and the piecewise-smooth, hard impact flow are $C^r$ close on a section bounded away from the impact boundary. The detailed conditions of  \cite{RK2014smooth} and their realization in the context of the current setup are included, for completeness, in appendix B. Then, the results in \cite{RK2014smooth} can be used to prove a  somewhat weaker version of Theorem \ref{thm:near-int} that applies to the soft impact case (in particular, unless \(\epsilon_{b}\) is taken to be very small, the form of the perturbed return map also depends on the errors gathered by the singular perturbation term, see corollary \ref{cor:fformsoft}): 

\begin{thm}\label{thm:softnearintegrable}
Consider a Hamiltonian $H$ of the form Eq.
(\ref{eq:hsoft}) with an S3B  integrable structure \(H_{int}\),  a regular wall position, and a soft billiard potential $V_b$ satisfying conditions I-IV (see appendix B).   Fix \(\delta>0, \rho>0\), let ${\epsilon}=(\epsilon_{r},\epsilon_w,\epsilon_b)$ and \(\varepsilon=||\epsilon||\), and consider a  \(\delta-\)regular energy level $H$ satisfying $H<H_{max}(b)$ (see appendix). Then for \(I\in H_2^{-1}(\mathbb{H}_{2}^{R,\delta}(H))\backslash\mathcal{N}_{\rho}(I_{tan})\), for all \(\theta\),  for sufficiently small $\varepsilon_{}$ the return
map $\mathcal{F}_{\epsilon}:(I,\theta)\rightarrow(I',\theta')$ is symplectic,  $C^{r}$ smooth and $C^k$ close to the unperturbed impact return map \(\mathcal{F}_{0}\) of Eq. (\ref{eq:int_impact_returnmap}) for any $k\leq{}r$. Namely, for all $(I,\theta)$ in this bounded domain, there exists $\varepsilon_b(H,\delta,\rho)>0$ such that for all $\varepsilon\in[0,\varepsilon_b(H,\delta,\rho))$, $\mathcal{F}_\epsilon=\mathcal{F}_0+\mathit{o}_{C^k}(1)$.
\end{thm}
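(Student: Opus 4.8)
Theorem \ref{thm:softnearintegrable}: soft-impact near-integrability via the hard-impact comparison of \cite{RK2014smooth}.

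The plan is to reduce the soft-impact return map to the hard-impact return map $\mathcal{F}_0$ of Eq.~(\ref{eq:int_impact_returnmap}) by invoking the $C^r$-closeness result of \cite{RK2014smooth} between the soft flow and the hard impact flow, and then to compose this with the regular-perturbation comparison already established in Theorem \ref{thm:near-int}. First I would observe that the soft Hamiltonian (\ref{eq:hsoft}) contains three independent small parameters: the regular coupling $\epsilon_r$, the wall tilt $\epsilon_w$, and the softness $\epsilon_b$. For the first two, Theorem \ref{thm:near-int} already shows that the \emph{hard} soft-free system is symplectic, $C^r$, and $\varepsilon$-$C^r$ close to $\mathcal{F}_0$ on the regular set $H_2^{-1}(\mathbb{H}_{2}^{R,\delta}(H))\backslash\mathcal{N}_{\rho}(I_{tan})$. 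The new ingredient is $\epsilon_b$: I would hold the hard-wall geometry fixed (with its $\epsilon_r,\epsilon_w$ data) and compare the soft flow of (\ref{eq:hsoft}) against the corresponding hard impact flow. The conditions I--IV recorded in appendix B are precisely the hypotheses under which \cite{RK2014smooth} guarantees that, on a cross-section bounded away from the impact boundary and for trajectory segments bounded away from tangencies with energy below $H_{max}(b)$, the soft and hard Poincar\'e maps are $C^k$ close for every $k\le r$, with the closeness tending to zero as $\epsilon_b\to0$.

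The key steps, in order, are as follows. First I would fix $\delta,\rho>0$ and a $\delta$-regular energy level $H<H_{max}(b)$, and restrict attention to $I\in H_2^{-1}(\mathbb{H}_{2}^{R,\delta}(H))\backslash\mathcal{N}_{\rho}(I_{tan})$; on this set the unperturbed impact trajectory is transverse to the wall and the return time $t_0^{**}$ is finite, exactly as in Theorem \ref{thm:(Smoothness)}. Second, I would verify that the hypotheses of \cite{RK2014smooth} (conditions I--IV) hold uniformly on this regular set: the cross-section $\Sigma$ is bounded away from the wall $q_1=0$, the relevant trajectory segments avoid tangencies by the exclusion of $\mathcal{N}_{\rho}(I_{tan})$, and the energy constraint $H<H_{max}(b)$ keeps the barrier impassable so the soft trajectory actually turns. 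Third, by \cite{RK2014smooth}, for sufficiently small $\epsilon_b$ the soft return map $\mathcal{F}_\epsilon$ is $C^r$ and $C^k$-close (any $k\le r$) to the hard return map built from the same $\epsilon_r,\epsilon_w$; symplecticity is automatic since both flows are Hamiltonian and $\Sigma$ is a symplectic cross-section. Fourth, I would compose this with Theorem \ref{thm:near-int}: the hard $(\epsilon_r,\epsilon_w)$-map is $\varepsilon$-$C^r$ close to $\mathcal{F}_0$, and the soft-to-hard error is $o_{C^k}(1)$ in $\epsilon_b$, so by the triangle inequality in the $C^k$ norm $\mathcal{F}_\epsilon=\mathcal{F}_0+o_{C^k}(1)$ as $\varepsilon\to0$. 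The existence of $\varepsilon_b(H,\delta,\rho)>0$ then follows by taking the minimum of the threshold from \cite{RK2014smooth} and the threshold $\varepsilon_0(H,\delta,\rho)$ of Theorem \ref{thm:near-int}.

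The main obstacle I expect is the \emph{uniform} verification of conditions I--IV on the regular set and the extraction of a single closeness statement in $\varepsilon=\|(\epsilon_r,\epsilon_w,\epsilon_b)\|$ rather than separately in each parameter. The subtlety is that the soft-to-hard closeness of \cite{RK2014smooth} is quantified by $\epsilon_b$ alone and need not be of order $\varepsilon$; this is why the statement is deliberately weakened to $o_{C^k}(1)$ rather than $\varepsilon$-$C^r$, and why the remark about the perturbed return-map form (Corollary \ref{cor:fformsoft}) flags that the soft error can dominate the regular-perturbation term unless $\epsilon_b$ is taken very small. Concretely, I would need to ensure that the neighborhood on which the soft flow is compared does not shrink to zero faster than the $\mathcal{O}(\Delta)$ regular margin (with $\Delta=\min(\rho/4,K_1\delta,K_2\delta|\ln(2\delta)|)$ as in Corollary \ref{cor:nonrestori-persist}), so that the composed map stays inside the domain where all three comparisons are valid simultaneously. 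This matching of scales between the singular-perturbation error estimates of \cite{RK2014smooth} and the regular $\delta,\rho$-margins is where the real care is required; the rest is bookkeeping with the triangle inequality in $C^k$ and the smooth-dependence-on-parameters of finite-time Hamiltonian flows, exactly as in the proof of Theorem \ref{thm:near-int}.
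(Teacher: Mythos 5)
Your proposal is correct and follows essentially the same route as the paper's proof: decompose the comparison into a soft-to-hard step (invoking Theorem 1 of \cite{RK2014smooth} under conditions I--IV of appendix B, using that $\Sigma$ is bounded away from the wall, orbits are bounded away from tangency, and $H<H_{max}(b)$ prevents crossing the barrier) and a hard-to-$\mathcal{F}_0$ step via Theorem \ref{thm:near-int}, combined by the triangle inequality, with symplecticity and $C^r$ smoothness immediate from the smooth Hamiltonian soft flow. Your additional remarks on why the closeness is only $\mathit{o}_{C^k}(1)$ in $\epsilon_b$ match the paper's own remark following the theorem.
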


\begin{proof}
Symplecticity and smoothness of the soft impact flow, and hence the map, are immediate. Since the transverse section \(\Sigma\) is bounded away from the wall, and since we consider orbits which are bounded away from being tangent, the \(C^{k}\) closeness of \(\mathcal{F}_{(\epsilon_{r},\epsilon_w,\epsilon_b)}\) and \(\mathcal{F}_{(\epsilon_{r},\epsilon_w,\epsilon_b=0^+)}\) follows from Theorem 1 in \cite{RK2014smooth} (see appendix, where the conditions of  Theorem 1 in \cite{RK2014smooth} are shown to be satisfied here, and the bounds on  \(H\) are shown to guarantee that for sufficiently small \(\varepsilon\) particles cannot cross the wall). The  $C^r$ closeness of  \(\mathcal{F}_{(\epsilon_{r},\epsilon_w,\epsilon_b=0^+)}\)  to \(\mathcal{F}_{0}\) for sufficiently small $\epsilon_r,\epsilon_w$ follows from Theorem \ref{thm:near-int}. \end{proof}

\begin{rem*}
Notice that the approximation of the near integrable map by the integrable one is weaker here, as the error is $\mathit{o}(1)$ in $\epsilon_b$, versus the $\mathcal{O}(\varepsilon)$ error in Theorem \ref{thm:near-int}. This is due to the singular nature of the soft billiard perturbation, as opposed to the regular perturbations of the Hamiltonian structure or the vertical wall shape. Error estimates for the hard billiard case have been calculated in \cite{rapoport2007approximating} for some specific forms of $V_b$. These estimates may be extended to the soft impact case and used to derive an explicit formula for the first order approximation term to the soft impact return map. The exact formulation is left for future works. However, the existence of such estimates implies that there exists $\epsilon_b$ sufficiently small such that the error remains $\mathcal{O}(\varepsilon)$, i.e.
\end{rem*}

\begin{cor}\label{cor:fformsoft}
There exists $\epsilon_{b,k}(\epsilon_r,\epsilon_w)$ such that for all $\epsilon_b<\epsilon_{b,k}$, under the conditions of Theorem \ref{thm:softnearintegrable}, the soft impact return map $\mathcal{F}_\epsilon$ is $\varepsilon-C^k$ close to $\mathcal{F}_0$ and in the special calculable cases  the first order term in $\varepsilon$ of the soft impact return map takes the corresponding forms (\ref{eq:return_map_perpnearint}), (\ref{eq:return_map_nearperpint}) or (\ref{eq:return_map_nearperpnearint}).
\end{cor}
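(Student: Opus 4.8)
The plan is to prove Corollary~\ref{cor:fformsoft} by carefully bookkeeping the two independent sources of error: the \emph{regular} perturbation error of size $\mathcal{O}(\varepsilon)$ that comes from $\epsilon_r,\epsilon_w$ and was already controlled in Theorems~\ref{thm:near-int} and~\ref{thm:(Return-map-regpert)}, and the \emph{singular} perturbation error $\mathit{o}_{C^k}(1)$ in $\epsilon_b$ produced by replacing the hard wall with the steep potential $V_b(\cdot;\epsilon_b)$, which Theorem~\ref{thm:softnearintegrable} established. The whole point is that these two errors are quantitatively independent in the sense that for fixed $(\epsilon_r,\epsilon_w)$ one may still send $\epsilon_b\to 0$, and the claim amounts to saying that the softness error can always be driven below the regular error by choosing $\epsilon_b$ small enough, \emph{relative to} the already-fixed $\varepsilon$.

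First I would invoke the remark preceding the corollary: the softness error is not merely $\mathit{o}(1)$ but admits, via the estimates of \cite{rapoport2007approximating} extended to the soft case, an explicit bound of the form $\|\mathcal{F}_{(\epsilon_r,\epsilon_w,\epsilon_b)}-\mathcal{F}_{(\epsilon_r,\epsilon_w,0^+)}\|_{C^k}\le E_k(\epsilon_b)$ with $E_k(\epsilon_b)\to 0$ as $\epsilon_b\to 0$. The existence of such a monotone error modulus $E_k$ is exactly what upgrades the qualitative $\mathit{o}(1)$ of Theorem~\ref{thm:softnearintegrable} to a usable quantitative statement. Given a target $\varepsilon=\|(\epsilon_r,\epsilon_w)\|$, I would then \emph{define} $\epsilon_{b,k}(\epsilon_r,\epsilon_w)$ to be any value small enough that $E_k(\epsilon_b)\le C\varepsilon$ for all $\epsilon_b<\epsilon_{b,k}$, where $C$ is the constant implicit in the $\mathcal{O}(\varepsilon)$ bound of Theorem~\ref{thm:near-int}; such a threshold exists precisely because $E_k(\epsilon_b)\to 0$.

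The second step is the triangle inequality in the $C^k$ topology on the regular set $H_2^{-1}(\mathbb{H}_{2}^{R,\delta}(H))\backslash\mathcal{N}_\rho(I_{tan})$:
\begin{equation}
\|\mathcal{F}_{(\epsilon_r,\epsilon_w,\epsilon_b)}-\mathcal{F}_0\|_{C^k}\le \|\mathcal{F}_{(\epsilon_r,\epsilon_w,\epsilon_b)}-\mathcal{F}_{(\epsilon_r,\epsilon_w,0^+)}\|_{C^k}+\|\mathcal{F}_{(\epsilon_r,\epsilon_w,0^+)}-\mathcal{F}_0\|_{C^k}.
\end{equation}
The first term is bounded by $E_k(\epsilon_b)\le C\varepsilon$ by the choice of $\epsilon_{b,k}$, and the second is $\mathcal{O}(\varepsilon)$ by Theorem~\ref{thm:near-int} (noting $C^k\subseteq C^r$ closeness for $k\le r$). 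Hence the sum is $\mathcal{O}(\varepsilon)$, which is the asserted $\varepsilon-C^k$ closeness. For the statement about the explicit first-order term, I would argue that since the total deviation from $\mathcal{F}_0$ is now genuinely $\mathcal{O}(\varepsilon)$ with the singular contribution absorbed into this order, the leading-order coefficient of $\varepsilon$ in the soft map coincides with that of the hard map; thus the formulas \eqref{eq:return_map_perpnearint}, \eqref{eq:return_map_nearperpint}, \eqref{eq:return_map_nearperpnearint} computed along the hard impact trajectory $z_0^{im}$ carry over verbatim, with corrections pushed to $\mathcal{O}(\varepsilon)$.

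The main obstacle I anticipate is \emph{justifying the existence of the monotone error modulus $E_k(\epsilon_b)$ with a rate independent of $(\epsilon_r,\epsilon_w)$ on the whole regular domain}, rather than merely the pointwise $\mathit{o}(1)$ that Theorem~\ref{thm:softnearintegrable} supplies. Theorem~1 of \cite{RK2014smooth} gives $C^k$ closeness for the unperturbed-wall soft impact, but here the soft flow is propagated by the perturbed Hamiltonian $H(\cdot;\epsilon_r,\epsilon_w,\epsilon_b)$, so one must check that the closeness constant in that theorem can be chosen uniformly as $(\epsilon_r,\epsilon_w)$ range over a neighborhood of the origin. Because the regular perturbation only $C^r$-perturbs the smooth flow segments and the impact section $\Sigma$ stays a fixed distance from the wall (the hypotheses I-IV of appendix~B persist under small $\epsilon_r,\epsilon_w$), I expect this uniformity to hold, but it is exactly the place where the proof must lean on the quantitative estimates flagged in the remark rather than on the qualitative limit alone; the corollary's phrasing (``the existence of such estimates implies'') signals that this is assumed available from the extension of \cite{rapoport2007approximating}, and the honest formulation is to state the result conditional on that extended estimate.
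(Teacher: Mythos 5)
Your argument is essentially the paper's own: the paper gives no separate proof of Corollary~\ref{cor:fformsoft}, deriving it directly from the preceding remark exactly as you do --- assume the quantitative error modulus from the (conjectured) extension of the estimates in \cite{rapoport2007approximating}, choose $\epsilon_{b,k}$ so the softness error is dominated by $\mathcal{O}(\varepsilon)$, and conclude via the triangle inequality with Theorem~\ref{thm:near-int} that the hard-wall first-order formulas carry over. Your closing caveat about uniformity of the error modulus in $(\epsilon_r,\epsilon_w)$ and the conditional nature of the statement matches the paper's own hedging (``the exact formulation is left for future works''), so there is no discrepancy to flag.
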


For example, we conjecture that if for a given soft potential form the error estimate for $C^k$ closeness as in \cite{rapoport2007approximating} is of $\mathcal{O}(\sqrt[k+2]{\epsilon_b})$, then for $\epsilon_b\leq\mathcal{O}(\epsilon_r^{k+2},\epsilon_w^{k+2})$ the overall error would be of $\mathcal{O}(\varepsilon)$ as required.

\section{Example  } \label{sec:Examples}

Consider the Hamiltonian $H_{int}=\frac{p_{1}^{2}}{2}+\frac{p_{2}^{2}}{2}-\frac{\lambda^{2}}{2}\cdot(q_{1}-q_{1s})^{2}+\frac{1}{4}\cdot(q_{1}-q_{1s})^{4}+\frac{\omega^{2}}{2}\cdot(q_{2}-q_{2s})^{2}$. In the $(q_{1},p_{1})$ plane the Hamiltonian flow has
a saddle point $(q_{1s},0)$ and a separatrix loop which encircles two symmetric centers   - the undamped Duffing
oscillator \cite{meyer2008introduction}; In $(q_{2},p_{2})$ there is a single linear center (see Figure \ref{fig:Duffing-ps}). The period in the $(q_{1},p_{1})$ plane is piecewise monotone; it becomes infinite at the separatrix  $(H_{1}=0)$, where it
reverses its  direction of monotonicity: \(T_{1}'(H_{1})\cdot{}\text{sign} (H_{1})<0\)
(see Figure \ref{fig:twistnoimpact}).  The period around the linear center is fixed: $T_{2}(H_{2})=\frac{2\pi}{\omega}$.  In the regular regions away from the  separatrix, action-angle variables can be defined
 and $H_{int}=H_{1}(J)+\omega I$. The periods' relation $2\pi\frac{T_1(H_1)}{T_2(H_2)}=\omega{}T_{1}(H-\omega I)$ in these regions is monotone, and so the smooth return map (\ref{eq:int_returnmap}) for $H_{int}$ is a twist map.
Consider now the impact system (\ref{eq:hgeneral}) with $H_{int}$ as above and $\epsilon_{r}=\epsilon_w=0$. The parameters $q_{1s},\lambda$ can be chosen such that the wall location is either inside or outside the separatrix loop (see  \cite{pnueli2016thesis}
for the different parameter ranges and the list of singular cases).
To demonstrate the results of sections \ref{sec:The-Integrable-System} and \ref{sec:Near-integrability-results}, we consider here  two regular cases: a) the tangent level set encircles the separatrix from outside, and b) the tangent level set is inside the separatrix, to the left of the left center point, see Figure \ref{fig:impact+duffing}. We show that in the former case the non-twist set remains empty whereas in the latter case there is an impacting non-twist torus (see Figure \ref{fig:twist-insidevsoutsidesep}).
\begin{figure}[h]
\begin{centering}
\includegraphics[scale=0.4]{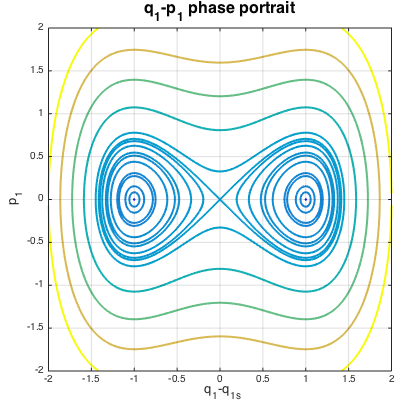}\includegraphics[scale=0.4]{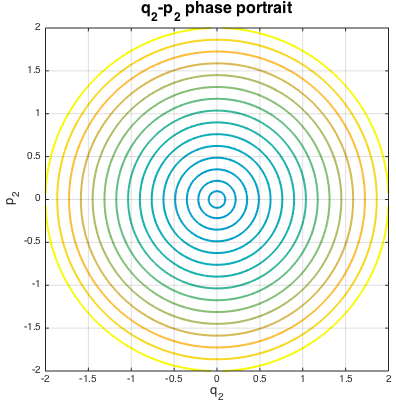}
\par\end{centering}
\protect\caption{\label{fig:Duffing-ps}Energy level lines in the phase space $(q_{1},p_{1})$ (left), $(q_{2},p_{2})$ (right)}
\end{figure}

\begin{figure}
\begin{centering}
\includegraphics[scale=0.35]{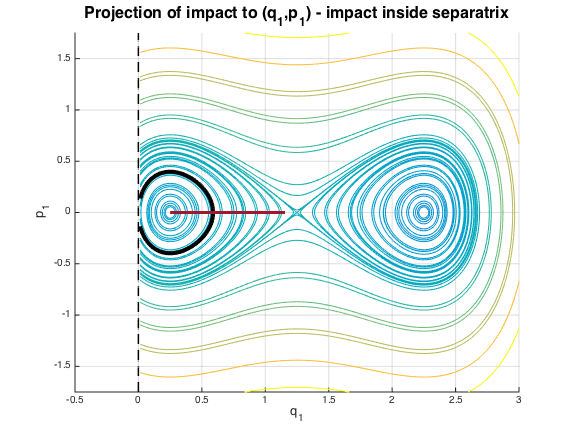}\includegraphics[scale=0.37]{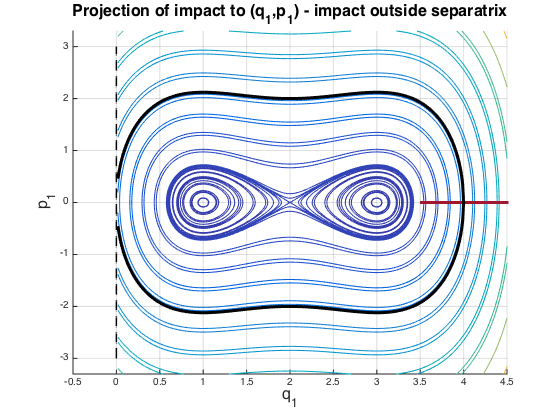}
\par\end{centering}
\protect\caption{\label{fig:impact+duffing}The separatrix and the wall locations. Here, $\lambda=1$. The wall,  located at $q^w_{1}=0$, 
can be either inside (left, \(q_{1s}=1.25)\)  or outside (right,  \(q_{1s}=2)\)   the separatrix.
The tangent level line corresponding to $J_{tan}$ is indicated in black. The cross-section $\Sigma$ of the return map is depicted in crimson, and is defined on the impacting branch.}
\end{figure}

\begin{figure}
\begin{centering}
\includegraphics[scale=0.4]{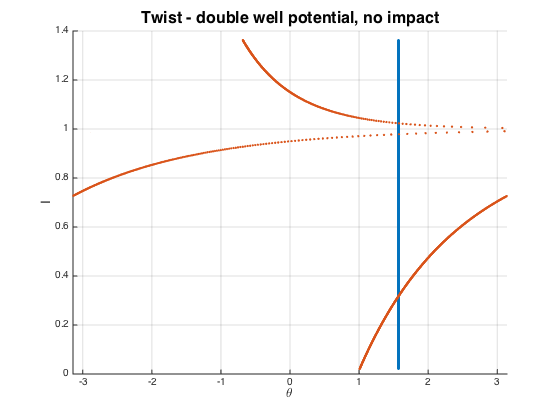}
\par\end{centering}
\protect\caption{\label{fig:twistnoimpact}The twist in the system described
by $H_{int}$, on the energy surface $H=1$. In blue is a line of initial $(I,\theta)$ values, and
in red - the corresponding $(I'=I,\theta')$ values following a single iteration
of the integrable return map (wrapped in $2\pi$). Here $I=1$ corresponds to
$H_{1}=0$. Due to the singularity at the separatrix, the twist changes
direction between energies inside and outside the separatrix. In either
case, there is no non-twist torus, due to the monotonicity of $T_{1}$.}
\end{figure}

In fact, one can show that for any regular wall position of the first type (or, respectively, of the second type) the non-twist set remains empty (respectively, has at least one non-twist torus). Indeed, this follows from the fact that for the impact system, on the regular set, \(\Theta(I)=\omega\tilde T_{1}(H-\omega{}I)=\omega T_{1}(H-\omega{}I)-\omega\Delta{}t_{travel}(H-\omega{}I)\), so \(\Theta'(I)=-\omega{}^{2}(T_{1}'(H-\omega{}I)-\Delta{}t_{travel}'(H-\omega{}I))\).
For \(I= (H\pm\delta)/\omega\), the first term approaches  \(\pm1/\delta\) as $\delta\rightarrow0$ (there \(H_{1}\) is in the \(\delta\) neighborhood of the separatrix). However, for \(I=( H-V_{1}(0)-\rho)/\omega\) (where \(H_{1}\) is larger by \(\rho\) from the tangent energy leaf, \(V_{1}(0)\)), the second term approaches  \(1/\sqrt{\rho}\) as $\rho\rightarrow0$.  For regular wall position the tangent level set and the separatrix are bounded away from each other, and thus it follows that if the tangency occurs inside the separatrix (\(V_{1}(0)<0\)) then, for sufficiently small \(\delta,\rho\),\ the rotation function derivative, \(\Theta'(I)\), must change sign over the interval \(I\in[(H+\delta)/\omega , \left(H-V_{1}(0)-\rho\right)/\omega]\subset S_g(H,\delta,\rho/\omega)\), and hence there exists at least one non-twist torus in the good set.    On the other hand,   if \(V_{1}(0)>0\) then \(\Theta'(I)<0\) for all \(I<H/\omega \) and thus the non-twist set remains empty even with impact.
See Figure \ref{fig:twist-insidevsoutsidesep} for illustration. 

\begin{figure}
\begin{centering}
\includegraphics[scale=0.45]{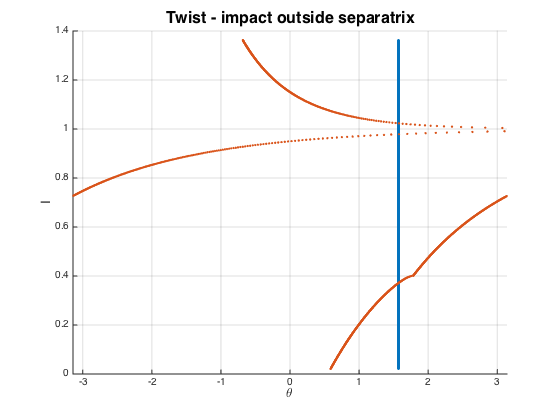}\includegraphics[scale=0.45]{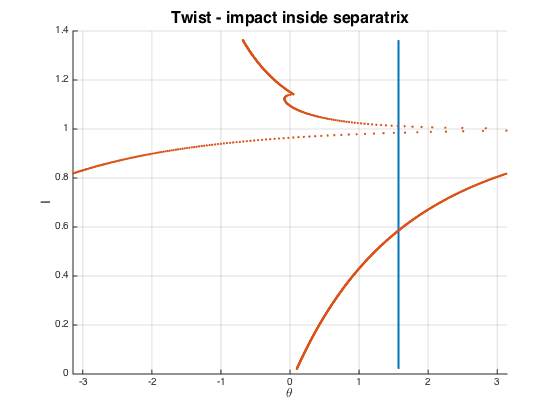}
\par\end{centering}

\protect\caption{\label{fig:twist-insidevsoutsidesep} The twist in $\theta$ when
impact is outside (left) or inside (right) of the separatrix. The points
of non-smoothness correspond to $I_{tan}$, in
which $\Delta t_{travel}$ is continuous but not smooth. $I$ values
below this value correspond to impacting trajectories. As can be seen
to the left, when the impact is outside the separatrix ($V_1(0)>0$) the impact
``contributes'' to the same direction of the original, non-impacting
twist. To the right, twist monotonicity is destroyed when impact is
inside the separatrix ($V_1(0)<0$) and a single non-twist torus is created at an intermediate point $I_{NT}\in{}(1,I_{tan})$.}
\end{figure}

\subsubsection*{Near integrability results}

Figures \ref{fig:nearintresults}-\ref{fig:I-EMBD} demonstrate numerically the
near-integrability results described in section \ref{sec:Near-integrability-results}, and in particular the equivalence between the perpendicular and near perpendicular cases. Figure \ref{fig:nearintresults} depicts the dynamics of the return map in the $(\theta,I)$ plane. Examined are the cases of a near perpendicular, straight wall with an underlying integrable structure, a perpendicular wall with underlying near integrable structure, and the near perpendicular, near integrable combination. In all three cases, near integrable behavior in the form of KAM tori and resonances can be seen in the regions bounded away from tangency and the separatrix. Identification of these regions is made easily using the Impact Energy-Momentum Bifurcation Diagrams in Figure  \ref{fig:I-EMBD} (see below). For impacting trajectories the similarities between all three cases  are  evident. For non-impacting trajectories, integrable behavior is seen at the top figure ($\epsilon_w=0.01$, $\epsilon_{r}=0$) whereas, naturally, the remaining cases ($\epsilon_r\neq{}0$) exhibit near integrable behavior even when the trajectories do not hit the wall\footnote{Near the separatrix  the map (\ref{eq:return_map_nearint}) is not well defined: the same $(\theta,I)$ values may correspond to two different sections in the \((q_{1},p_1)\) plane. One needs to use the separatrix map to obtain well defined sections there. Since the separatrix is not studied here, yet we want to present the global behavior, we do extend the marked section across the separatrix and ignore for now the observed artificial multiplicity which appears for trajectories that cross the separatrix.
}. 

In Figure \ref{fig:I-EMBD}, the same dynamics are depicted in the $(H_{int},I)$ plane, using an Impact Energy-Momentum Bifurcation Diagram,  providing insights about the structure of the flow at different energy values; The classical Energy-Momentum Bifurcation Diagram (EMBD) \cite{lerman1998integrable,Arnold2007CelestialMechanics} for the smooth Hamiltonian is, in our case, a plot in the $(H,I)$ space, where $H$ is the energy of the integrable system and $I$ is the action variable in the $(q_2,p_2)$ phase space.
In this plot the regions of allowed motion are shaded grey, and the curves corresponding to the $(H_{int},I)$ values on singular level sets of the system are depicted as dashed lines (respectively, solid lines) for singular level sets that include normally hyperbolic (respectively, normally elliptic)  circles. Together with either Fomenko graphs or indicators of the number of Liouville leaves in each region \cite{fomenko2004integrable, Arnold2007CelestialMechanics}, such plots help to classify the dynamics on different energy surfaces.

Here we introduce a new variant to this representation, the Impact-EMBD, in which we add the projection of the conditions of impact (blue) and tangency (green) into the EMBD. When the wall is perpendicular, this projection results in a line which corresponds to tangent tori, and which separates between impacting and non-impacting trajectories. When the wall is not perpendicular, due to the breaking of the symmetry, the condition for tangency projects onto the I-EMBD as a 2-dimensional zone. While in the symmetric case each point on the tangency line corresponded to tori on which all initial conditions achieved tangency at first collision, in the non-symmetric case this is satisfied by only a finite (see \cite{pnueli2016thesis}) number of points on each torus in the tangency zone. For the non-perpendicular wall the minimal energy for impact again coincides with the minimal energy for a possible tangency. By projecting the dynamics into the I-EMBD we achieve a classification of the different types of trajectories in relation to the impact and internal phase space structure. These behaviors are then demonstrated in the $(\theta,I)$ plane (notice that the vertical axis in both projections corresponds to $I$ values, which, together with fixing the total energy, allows for straightforward inference between the two different projections).

\begin{figure}[h]
\begin{centering}
\includegraphics[scale=0.45]{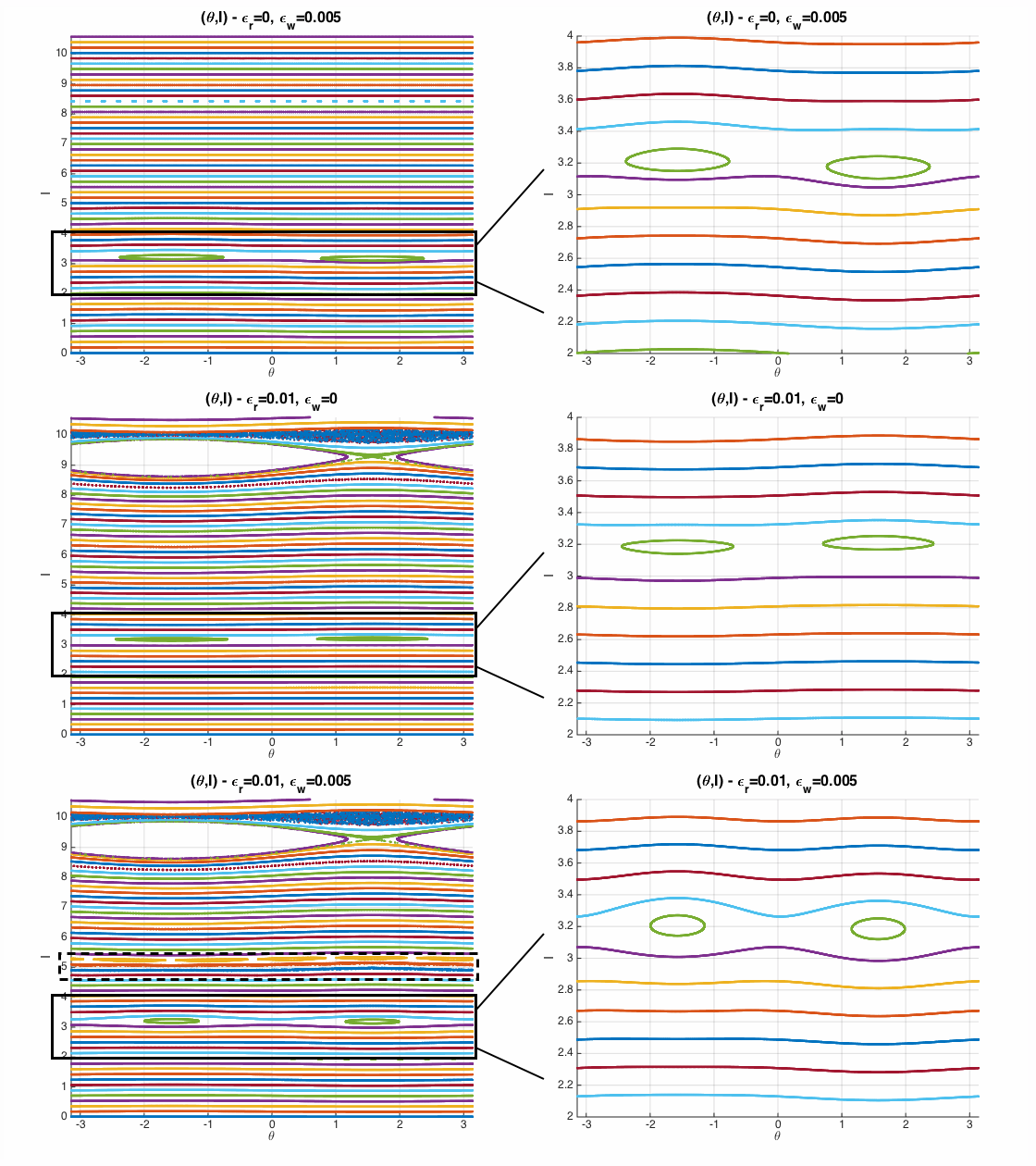}
\par\end{centering}

\protect\caption{\label{fig:nearintresults}Poincar\'{e} return map $(\theta',I')$ for the
following cases: (top) $q_1^w=\epsilon_wq_2^w, \epsilon_w=0.01$, $\epsilon_{r}=0$;
(middle) $\epsilon_w=0$, $\epsilon_{r}=0.005$; (bottom) $q_1^w=\epsilon_wq_2^w, \epsilon_w=0.01$, $\epsilon_{r}=0.005$. Initial conditions
for all three figures are the same. To the left, the entire possible range of $I$ values is depicted. To the right, a zoom on a region away from the separatrix
and from tangency is shown. KAM tori and resonances can be seen, as well as
the similarity between the three different settings. The  distinction between impacting and non-impacting $I$ values can be easily made by comparison with Figure \ref{fig:I-EMBD}.}

\end{figure}

\begin{figure}[h]
\begin{centering}
\includegraphics[scale=0.3]{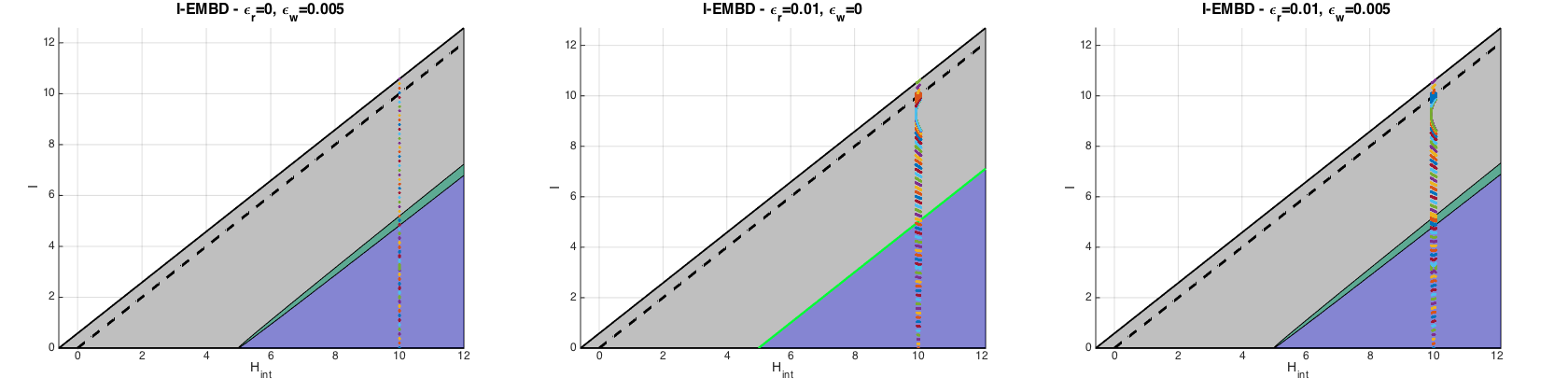}
\par\end{centering}

\protect\caption{\label{fig:I-EMBD}I-EMBD $(H_{int},I)$ for the
following cases: (left) $q_1^w=\epsilon_wq_2^w, \epsilon_w=0.01$, $\epsilon_{r}=0$;
(middle) $\epsilon_w=0$, $\epsilon_{r}=0.005$; (right) $q_1^w=\epsilon_wq_2^w, \epsilon_w=0.01$, $\epsilon_{r}=0.005$. Notice that in the cases of a small tilt (left and right), tangency is projected as a zone, whereas for the vertical wall it is projected as a line. The return map values depicted in Figure \ref{fig:nearintresults} are projected here into the I-EMBD.}

\end{figure}

\section{Discussion} \label{sec:Conclusions}
Near integrability results for a class of 2 d.o.f separable mechanical impact systems with a single wall were derived. When the wall conserves the symmetry of the integrable system - here, the separability - the system remains integrable. In particular, local sections allow to define Poincar\'{e} return maps that are smooth and satisfy the twist condition. We proved that breaking the separability of the system by the addition of a small regular perturbation, a small perturbation of the wall, making the wall soft, or a combination of all these effects together, may destroy the integrability of the return map, yet the map remains near integrable  for a large portion of the phase space (Theorems \ref{thm:near-int}, \ref{thm:(Return-map-regpert)} and \ref{thm:softnearintegrable}). For the case of a small regular perturbation and a slightly tilted, straight wall, an explicit form of the first order term in the perturbed return map was derived, a form which applies also to the soft impact formulation in the limit of very steep potential. The correction terms which arise from the steep potential part could be possibly derived as well (see \cite{rapoport2007approximating}). 

The dynamics near singularities of the impact system - separatrices and tangencies - are yet to be explored, as is the limit of large energy values.  Away from tangency, the dynamics near the separatrix are expected to exhibit the usual separatrix splitting and homoclinic chaos, similar to the smooth case.
The  near tangent dynamics are expected to produce more exotic behavior, as is demonstrated in Figure  \ref{fig:neartangent}. Notably, some aspects of this behavior have been explored by Neishtadt in \cite{neishtadt2008jump}, for a 1.5 d.o.f system with slow-fast dynamics. The system (\ref{eq:hgeneral})  may be reduced to such a system when $\omega_{1}(J)\gg\omega_{2}(I)$; Indeed,  let us denote $\frac{T_{1}(J)}{T_{2}(I)}=\delta$, where $\delta>0$ and small,  define the slow time variable $\tau=\delta t$ and symbolically denote the slow variables
\begin{equation}
q_{2}=q_{2}(\tau),\ p_{2}=p_{2}(\tau)
\end{equation}
Since  $q_1^w=\epsilon_wQ(q_2^w)$  
the collision point with the wall varies slowly with the evolution
of $q_{2}$ - $q_{1}^{*}=\epsilon_wQ(q^{*}_{2}(\tau))$. Similarly,
the perturbation $\epsilon_{r}V_{r}(q_{1},q_{2})$ changes slowly
in time - $\epsilon_{r}V_{r}(q_{1};\tau)$. The slow-fast system can
therefore be effectively reduced to a 1.5 degrees of freedom system
with a slowly varying potential and a slowly moving wall, and the
results of \cite{neishtadt2008jump} can be applied. We leave for future works the relation of these results and the fascinating patterns seen in   Figure  \ref{fig:neartangent}. 
\begin{figure}[h]
\begin{centering}
\includegraphics[scale=0.5]{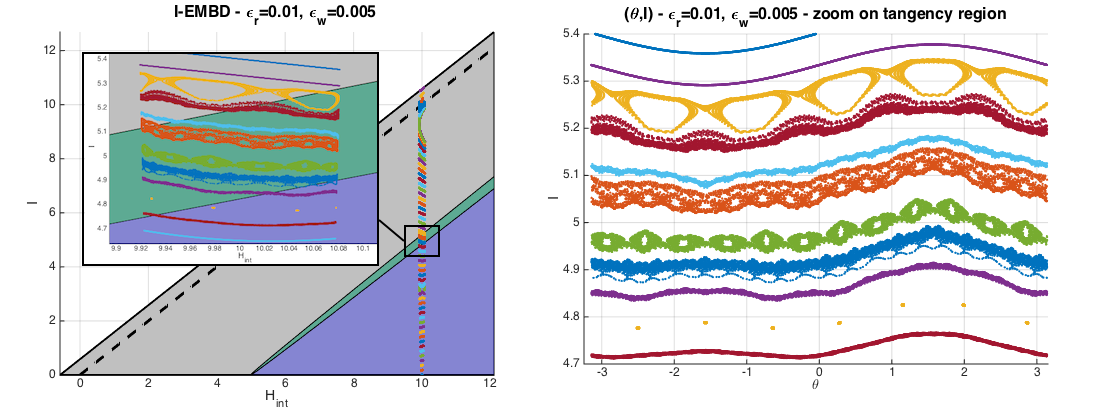}
\par\end{centering}

\protect\caption{\label{fig:neartangent}I-EMBD (right) and return map values (left) for near tangent initial conditions, indicated in the bottom left image in Figure \ref{fig:nearintresults} by a dashed rectangle (higher resolution of initial values is applied to the relevant region).}

\end{figure}

\appendix
\section*{Appendices}

\section{Boundedness of the perturbation terms on the energy surface}

\begin{lem}
\label{lem:pert-energy-surface}The perturbed energy surface corresponding
to a constant energy level set $h=H_{int}+\epsilon_{r}V_{r}$ is bounded.
\end{lem}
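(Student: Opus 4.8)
The plan is to reduce boundedness of the four-dimensional energy surface to a coercivity statement for the perturbed configuration-space potential $\widetilde V := V_{int} + \epsilon_r V_r$, after which boundedness of the momenta will come for free from the mechanical form of $H$. First I would write
\[ \Sigma_h = \{(q,p)\colon \tfrac{1}{2}\|p\|^2 = h - \widetilde V(q)\}, \qquad \mathcal{A}_h = \{q\colon \widetilde V(q)\le h\}, \]
so that the admissible configuration region is $\mathcal{A}_h$ and, on $\Sigma_h$, $\|p\|^2 = 2(h-\widetilde V(q)) \le 2(h - \inf_{\mathcal{A}_h}\widetilde V)$. Hence it suffices to prove that $\mathcal{A}_h$ is bounded: it is closed by continuity of $\widetilde V$, so boundedness makes it compact, $\widetilde V$ then attains a finite minimum on it, and the displayed inequality bounds $\|p\|$ uniformly. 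The surface $\Sigma_h$ is thereby contained in a compact product of the bounded sets in $q$ and $p$.

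The core step is the coercivity of $\mathcal{A}_h$. By the S3B assumptions each $V_i$ is bounded below and $V_i(q_i)\to\infty$ as $|q_i|\to\infty$; since $V_{int}(q) = V_1(q_1)+V_2(q_2) \ge V_1(q_1)+\min V_2$ and symmetrically in the other coordinate, a short argument shows $V_{int}(q)\to\infty$ as $\|q\|\to\infty$, i.e. $V_{int}$ is coercive with compact sublevel sets. I would then control the coupling term so that the perturbation cannot destroy this coercivity: assuming $V_r$ is dominated by $V_{int}$ at infinity (for a bounded coupling, $|V_r|\le M$; more generally $|V_r(q)| = o(V_{int}(q))$ as $\|q\|\to\infty$), for small enough $\epsilon_r$ one has $\widetilde V(q) \ge V_{int}(q) - \epsilon_r|V_r(q)| \to \infty$. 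In the bounded case this is immediate from the inclusion $\mathcal{A}_h \subseteq \{V_{int}\le h+\epsilon_r M\}$, a compact sublevel set of $V_{int}$.

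Finally, compactness of $\Sigma_h$ yields the statement used in the main text: $V_r$, being continuous, is bounded on $\Sigma_h$ by a constant depending on $h$ (and on $\epsilon_r$). The hard part, and really the only genuine content, is the growth control of $V_r$: $C^{r+1}$ smoothness alone does not prevent $V_r$ from growing faster than $V_{int}$ — for instance $V_r \sim -q_1^4$ against $V_1\sim q_1^2$ would render $\widetilde V$ non-coercive and $\mathcal{A}_h$ unbounded — so the lemma must rest on $V_r$ being dominated by $V_{int}$ at infinity. This is the hypothesis I would make explicit at the outset; granting it, every remaining step follows directly from coercivity and the kinetic-plus-potential structure of $H$.
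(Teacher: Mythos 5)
Your proof is correct under the hypothesis you add, but it takes a genuinely different route from the paper's. The paper does not argue via coercivity of the perturbed potential $\widetilde V=V_{int}+\epsilon_r V_r$. Instead it works on the Hill boundary: for $\epsilon_r=0$ the level set $\{V_{int}=h\}$ is bounded by the S3B assumption, with some radius $R(h)$; for $h$ bounded away from critical values of $V_{int}$ one has $\|\nabla V_{int}\|>c>0$ on this set, and the implicit function theorem continues each unperturbed boundary point to a nearby zero of $F(q;\epsilon_r)=V_{int}+\epsilon_r V_r-h$, giving $\|q^{\epsilon}\|<2R(h)$ for $\epsilon_r<\epsilon_0$; energies near the finitely many critical values are then handled by the nestedness of Hill regions, $S_{h_1}\subset S_{h_2}$ for $h_1<h_2$. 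Your sublevel-set argument is more elementary and more uniform: it needs no IFT and no case split at critical energies, and it yields the clean inclusion $\mathcal{A}_h\subseteq\{V_{int}\le h+\epsilon_r M\}$ together with the resulting momentum bound. The trade-off is that you must postulate domination of $V_r$ by $V_{int}$ at infinity, a hypothesis the paper never states.

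Your closing observation is a genuine point rather than a defect of your proof. The paper's IFT step only produces perturbed boundary points near the unperturbed ones; it cannot exclude additional far-away components of $\{V_{int}+\epsilon_r V_r\le h\}$ when $\epsilon_r V_r$ is very negative at infinity (your $V_r\sim -q_1^4$ against $V_1\sim q_1^2$ example), so the paper's jump from ``the continued solutions satisfy $\|q^{\epsilon}\|<2R(h)$'' to ``the Hill region is compact'' tacitly uses exactly the growth control you make explicit --- or, read charitably, the lemma concerns only the connected continuation of the unperturbed Hill region, which is where the dynamics of interest live and where boundedness does follow from the IFT continuation. In short: same conclusion, different machinery (coercivity of sublevel sets versus IFT continuation plus nestedness), with your version surfacing a hypothesis on $V_r$ that the paper's argument silently relies on.
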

\begin{proof}
This is a result of the assumptions on the potential form and the implicit function theorem.
Note that since the system in consideration is mechanical,
i.e. $H=\Sigma_{i=1,2}\frac{p_{i}^{2}}{2}+V_{i}(q_{i})+\epsilon_{r}V_{r}(q_{1},q_{2})$,
it is enough to show that the Hill region - the allowed region of
motion in the configuration space, $(q_{1},q_{2})$ - is bounded.
Indeed, if the motion in $q_{1},q_{2}$ is restricted to a compact
Hill region, then from the assumptions on smoothness and boundedness the potential values $V_{1},V_{2},V_{r}$ are bounded and thus so are the momenta, hence the energy surface is bounded. 

Consider therefore the boundaries in $q$. These boundaries are the
potential level sets which define the Hill region, and are defined
by the equation $V_{int}(q_{1},q_{2})+\epsilon_{r}V_{r}(q_{1},q_{2})-h\equiv F(q_{1},q_{2};\epsilon_{r})=0$.
For $\epsilon_{r}=0$, by the S3B assumption, the solution of the equation $F(q_{1}^{0},q_{2}^{0};0)=0$
is a bounded region, i.e. there exists $R$
such that $||(q_{1}^{0},q_{2}^{0})||<R(h)$ for all $(q_{1}^{0},q_{2}^{0})$
on the energy surface $H_{int}=h$. Consider first energy levels which are bounded away from those containing fixed points of the integrable system (by the S3B assumption there are a finite number of such excluded energy intervals). Now, for $F(q_{1}^{\epsilon},q_{2}^{\epsilon};\epsilon)=0$,
since $||\mathbf{\nabla}V_{int}||>const.>0$
on such surfaces,  by the implicit function theorem  there exists
$\epsilon_{0}$ such that for all $\epsilon<\epsilon_{0}$, there exist solutions $(q_1^\epsilon,q_2^\epsilon)$ which are $\epsilon-$ close to $(q_1^0,q_2^0)$, and hence, for example, for sufficiently small $\epsilon_0$, $||(q_{1}^{\epsilon},q_{2}^{\epsilon})||<2R(h)$.
Hence, $q_{1},q_{2}$ are bounded and the Hill region is indeed compact.

Now consider the intervals of energy which contain points that are close to the extremal points
of the potential (the fixed points of the Hamiltonian system), i.e. where $||\mathbf{\nabla}V_{int}||=0$. The number of these points is finite and they
are contained in a bounded domain, from the assumed structure of the integrable Hamiltonian. Furthermore, as the Hill regions
$S_{h_{1}},S_{h_{2}}$ for different energy values $h_{1}<h_{2}$
are level sets of the potential function, these are nested regions
in the configuration space - $S_{h_{1}}\subset S_{h_{2}}$. Hence
the energy surfaces corresponding to singular energy level sets and their nearby energy surfaces are
bounded as well.\end{proof}

\section{Conditions I-V for the soft impact system}
Theorem 1 in \cite{RK2014smooth} establishes that finite segments of trajectories of the smooth impact Hamiltonian flow
\begin{equation}
H(q, p) =
\frac{p^{2}}{2} + U(q)+ V (q; \epsilon_b) ,
\end{equation}  with energy \(H<H_{max}\) limit to those of the hard impact system in some general bounded domain \(D \) in \(\mathbb{R}^d\) or \(\mathbb{T}^d\)  , in the \(C^{r}\) topology, provided these segments contain only regular reflections, and the potentials satisfy some general conditions; The potential \(V(q; \epsilon_b)\) is assumed to be a soft-billiard potential (satisfying conditions I-IV of   \cite{RK2014smooth,rapoport2007approximating}, which are also listed below). The smooth,  \(C^{r+1}\),  potential \(U\) is quite general - one only assumes that on the  domain boundary, which is assumed to be of finite length, \(U\)  is bounded from below by  \(\hat{U}>-\mathcal{E}\) (condition V in \cite{RK2014smooth}, see below), where \(\mathcal{E}\) denotes the limit of the billiard potential  energy at the wall (\(V (q; \epsilon_b)\) as \(q\rightarrow q^{w}\) and  \(\epsilon_{b}\rightarrow0\), so \(\mathcal{E}\) may be finite, similar to the example in Eq. (\ref{eq:finitebarrier}) or infinite as in Eq. (\ref{eq:infinitebarrier})). Finally, setting the maximal energy to \(H_{max}<\mathcal{E}+\hat{U}\)  insures that particles with \(H<H_{max}\) do not escape from the billiard domain. Here, we denote the form of the soft impact potential by $b\cdot{}V_{b}(\cdot,\epsilon_b)$ and adopt the convention that the barrier height $\mathcal{E}\geq{}b$ and can, again, be either finite of infinite.

To apply the above result to the current work we need to address only one issue - formally, for simplicity,  the conditions in \cite{RK2014smooth}
were stated for compact domains \(D \) with finite length boundary \(\partial D\), whereas here the domain \(D=\{(q_{1},q_2)|q_1\geqslant q_1^w=\epsilon_{w}Q^w(q_2),q_2\in\mathbb{R}\} \) is unbounded and has infinite length boundary \(\partial D=\{(q_{1}^{w}(q_{2}),q_2), q_{2}\in\mathbb{R}\}\). Noting that for finite energies \(H\), by the S3B assumption  on \(V_{int}\) and  \(V_{r}\), the Hill regions for all \(H\leq H_{max}\) are compact and are contained in the compact Hill region of \(H_{max}\)  (see appendix A), solves this formal problem; In particular, one can choose\begin{equation}
\hat{V}=\min_{q\in D_{Hill}(H_{max})}V_{int}(q)-1\leq\min_{q\in D_{Hill}(H_{max})}(V_{int}(q)+\epsilon_rV_r(q,\epsilon_r)) \label{eq:uhatdef}
\end{equation}  and the results of \cite{RK2014smooth} directly apply as long as the potential \(V_b(\cdot,\epsilon_{b})\) of (\ref{eq:hsoft}) is a billiard-like-potential on \(D \) (satisfies the conditions I-IV that are listed below on this  domain, with the billiard boundary set at \(q_1=q_1^w\)). In fact, it is sufficient to require that these conditions are satisfied on \(D\cap D_{Hill}(H_{max})\).  \( \)

The conditions I-V of  \cite{RK2014smooth}
 are listed below, almost verbatim: in some places notation is adjusted and simplified to the setting of the current paper, which is two-dimensional and has only one boundary component with no corners. Additionally, some remarks regarding the current setup are included. 

\textbf{Condition I.} For any fixed compact region $K\subset{}D$, the potential $V_b(q_1,q_2;\epsilon_w;\epsilon_b)$ diminishes along with all its derivatives as $\epsilon_b\rightarrow0$:
$$\lim_{\epsilon_b\rightarrow0}||V_b(q_1,q_2;\epsilon_w;\epsilon_b)\mid_{(q_1,q_2)\in{K}}||_{C^{r+1}}=0$$

We assume that the level sets of $V_b$ may be realized by some finite function near the boundary. Let $N$ denote the fixed (independent of $\epsilon_b$) neighborhood of the billiard boundary $\partial{}D$ (for example, here, \(N=\{q|q_1^w <q_{1}<0.1\}\). Assume that for all small $\epsilon_b\geq0$ there exists a \textit{pattern function}
$$Q(q_1,q_2;\epsilon_b):N\rightarrow\mathbb{R}^1,$$
which is $C^{r+1}$ with respect to $(q_1,q_2)$ in $N$ and depends continuously on $\epsilon_b$ (in the $C^{r+1}$ topology, so it has, along with all its derivatives, a proper limit as $\epsilon_b\rightarrow0$). 

Further assume that the following is fulfilled:

\textbf{Condition IIa.} The billiard boundary is a level surface of $Q(q_1,q_2;0)$:
$$Q(q_1,q_2;\epsilon_b=0)\mid_{(q_1,q_2)\in{}\partial{}D}\equiv{}\mathcal{Q}=const.$$

In the neighborhood $N$ of the barrier $\partial{}D$ (so $Q(q_1,q_2;\epsilon_b=0)$ is close to $\mathcal{Q}$), define a \textit{barrier function} $W(Q;\epsilon_b)$, which is $C^{r+1}$ smooth in $Q$, is continuous in $\epsilon_b$, and does not depend explicitly on $(q_1,q_2)$. Also assume that there exists $\epsilon_0$ such that conditions IIb-c are satisfied.

\textbf{Condition IIb.} For all $\epsilon_b\in{(0,\epsilon_0]}$ the potential level sets in $N$ are identical to the pattern function level sets, and thus
$$b\cdot{}V_b(q_1,q_;\epsilon_b)\mid_{(q_1,q_2)\in{N}}\equiv{}W(Q(q_1,q_2;\epsilon_b)-\mathcal{Q};\epsilon_b).$$

\textbf{Condition IIc.} For all $\epsilon_b\in{}(0,\epsilon_0]$, $\nabla{}V_b$ does not vanish in the finite neighborhood of the boundary surface $N$; thus

$$\nabla{}Q\mid_{(q_1,q_2)\in{}N}\neq0,$$
and for all $Q(q_1,q_2;\epsilon_b)\mid_{(q_1,q_2)\in{}N}$,
$$\frac{d}{dQ}W(Q-\mathcal{Q};\epsilon_b)\neq0.$$

Adopt the convention that $Q>\mathcal{Q}$ corresponds to the points near $\partial{}D$ inside the billiard.

\textbf{Condition III.} There exists a constant $\mathcal{E}>0$ ($\mathcal{E}$ may be infinite) such that as $\epsilon_b\rightarrow{}+0$ the barrier function increases from zero to $\mathcal{E}$ across the boundary $\partial{}D$:
$$\lim_{\epsilon_b\rightarrow{}+0}W(Q;\epsilon_b)=\begin{cases}
0, & Q>\mathcal{Q}\\
\mathcal{E}, & Q<\mathcal{Q}
\end{cases}$$

\textbf{Condition IV.} As $\epsilon_b\rightarrow+0$, for any fixed $W_1$ and $W_2$ such that $0<W_1<W_2<c$, the function $Q(W;\epsilon_b)$ tends to zero uniformly on the interval $[W_1,W_2]$ along with all of its $(r+1)$ derivatives.

For example, one can take here \(Q(q_1,q_2;\epsilon_w)=q_{1}-\epsilon _{w}Q^{w}(q_{2}),\) and  billiard-like potential  functions  of the form \(b\cdot{}V_{b}=W(Q(q;\epsilon_w),\epsilon_b)\), with \(W(Q,\epsilon_b)=b\cdot\exp(-Q/\epsilon_b)\) and \(W(Q,\epsilon_b)=-b\cdot\epsilon_b/Q\) corresponding to the billiard potential (\ref{eq:infinitebarrier}),(\ref{eq:finitebarrier}) respectively, and $\mathcal{E}\geq{}b$. According to Theorem 1 of  \cite{RK2014smooth}, we can choose \(\epsilon_{w}\) to depend on \(\epsilon_b\) or take them independent.

The last condition is concerned with the addition of the smooth component of the potential $U(q)$ assuring that together with the billiard-like potential, particles that are initially in $D$ cannot escape. Defining  
 \(\hat{U}=\min_{q\in\partial D\ }U(q)\)
one assumes that:

\textbf{Condition V.} $U(q)$ is a $C^{r+1}$ smooth potential bounded in the $C^{r+1}$ topology on an open set $\mathcal{D}$, where $\bar{D}\in{}\mathcal{D}$. The minimum of $U$ on the boundary $\partial{}D$ satisfies $\hat{U}>-\mathcal{E}$.

Using in the above definition  $U(q)=V_{int}(q)+\epsilon_rV_r(q)$ 
it suffices in our setting to require $\hat{U}>-b$. In particular, Theorem 1 of   \cite{RK2014smooth}  then applies to trajectory segments with bounded energies which satisfy \(H^{}<H_{max}(b)=b+\hat{V}=b+\min_{q\in D_{Hill}(H_{max})}V_{int}-1 \) where \(b\) is the minimal billiard potential barrier height. 
\bibliography{KAM-type-bib}
\end{document}